\newcommand{\R}{\mathbb{R}}
\newcommand{\C}{\mathbb{C}}
\renewcommand{\H}{\mathbb{H}}
\newcommand{\1}{\mathbbm{1}}
\renewcommand{\P}{\mathbb{P}}
\newcommand{\M}{\mathcal{M}}
\newcommand{\I}{\mathcal{I}}
\newcommand{\avg}[1]{\langle #1 \rangle}
\definecolor{orange}{rgb}{1.00,0.50,0.0}
\newcommand{\comment}[1]{{\color{black}{#1}}}
\theoremstyle{plain}
\newtheorem{thm}{Theorem}
\newtheorem{proposition}{Proposition}
\newtheorem{corollary}{Corollary}
\theoremstyle{remark}
\newtheorem{remark}{Remark}
\theoremstyle{definition}
\begin{document}

\title{The real Ginibre ensemble with $k=O(n)$ real eigenvalues}

\author[1,2]{Luis Carlos Garc\'ia del Molino}
\author[1]{Khashayar Pakdaman}
\author[2]{Jonathan Touboul}
\author[3]{Gilles Wainrib}

\affil[1]{Institut Jacques Monod, CNRS UMR 7592, Universit\'e Paris Diderot, Paris Cit\'e Sorbonne, F-750205, Paris, France}
\affil[2]{Mathematical neuroscience Team, CIRB-Coll\`ege de France\footnote{CNRS UMR 7241,
INSERM U1050, UPMC ED 158, MEMOLIFE PSL*} and INRIA Paris-Rocquencourt, MYCENAE Team, 11 place Marcelin Berthelot, 75005 Paris, France}
\affil[3]{Ecole Normale Sup\'erieure, D\'epartement d'Informatique (DATA), 45 rue d'Ulm, 75005 Paris, France}
\renewcommand\Authands{ and }

\date{\today}%

\maketitle

\begin{abstract}
We consider the ensemble of real Ginibre matrices conditioned to have positive fraction $\alpha>0$ of real eigenvalues. We demonstrate a large deviations principle for the joint eigenvalue density of such matrices and introduce a two phase log-gas whose stationary distribution coincides with the spectral measure of the ensemble. Using these tools we provide an asymptotic expansion for the probability $p^n_{\alpha n}$ that an $n\times n$ Ginibre matrix has $k=\alpha n$ real eigenvalues and we characterize the spectral measures of these matrices.
\end{abstract}


\section{Introduction}
Random matrices constitute a central topic in modern probability
theory~\cite{Forrester10,TaoBook12,Bordenave12}
and an important tool for an increasing number of applications, from
physics~\cite{wigner1955lower,auffinger2013random} to biology
\cite{may1972will,wainrib2013topological,Garcia13} or
engineering~\cite{couillet,jaeger,compressedsensing}.

The present study deals with specific properties of a
canonical family of random matrices, the real Ginibre
ensemble. \comment{Under a reference probability $\P$,} the entries of such matrices are i.i.d. normal random variables \comment{with variance $\frac{1}{n}$ where $n$ is the matrix size}. In particular we intend to look deeper into the properties of real Ginibre matrices with anomalously large number of
real eigenvalues, which are still largely unknown. These constraints are
drastic for the random matrices, and affect the shape of the
distribution of the eigenvalues as illustrated in Fig. \ref{fig:spec}.
We aim at characterizing the eigenvalue distribution of random
matrices when the number of real eigenvalues $k$ is proportional to the matrix size
$n$.

Our work reveals that in this regime, the empirical spectral measure (ESD)
markedly departs from the one of the unconditioned ensemble. We establish that when $n \to
\infty$, the ESD converges to a limit that is supported
by both the real line and the complex plane.  We characterize the
macroscopic properties of this limit and analyze its microscopic
organization. In the process, we also obtain an estimate for the probability that an $n\times n$ matrix has $k$ real eigenvalues $p^n_k$
with $k=O(n)$ as $n\to \infty$.

\begin{figure}[t]
\centering
 \includegraphics[width=0.6\textwidth]{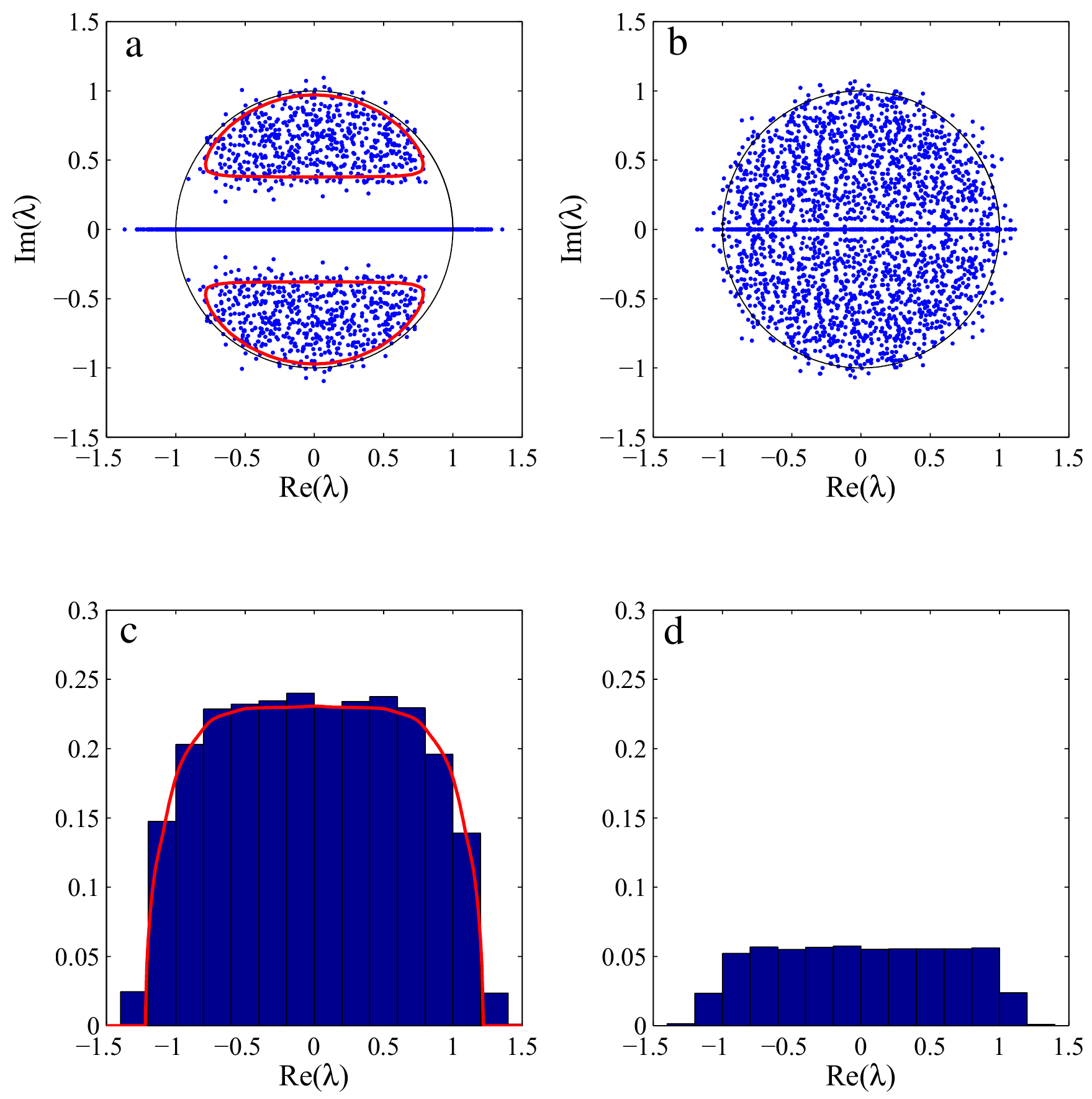}
 \caption{Top: Superposition of $50$ spectra of $50\times 50$ Ginibre matrices conditioned to have $26$ real eigenvalues (a) and unconditioned (b). Black lines correspond to the unit circle and red lines correspond to our estimation of the support of the complex part of the spectrum in the large $n$ limit. Bottom: Histogram of the real eigenvalues of $200$ $50\times 50$ Ginibre matrices conditioned to have $26$ real eigenvalues (c) and unconditioned (d). Histograms are normalized with respect to $n$. Red line corresponds to our estimation of the distribution of real eigenvalues in the large $n$ limit.}\label{fig:spec}
\end{figure}

\paragraph{Literature review.}
Before presenting our main results, we concisely review some relevant past results in random matrix theory. Characterizing the spectral properties of the real Ginibre ensemble
has been an active field of research. A milestone in this
direction was the computation, \comment{first in
\cite{Lehmann91} and later in \cite{Edelman97}}, of the joint probability distribution of
the eigenvalues \comment{$\lambda_1,\dots,\lambda_n$} of $n\times n$ real Ginibre matrices:
\begin{equation}\label{eq:pdf}
p[\lambda_1,\cdots,\lambda_n]=C_n\prod_{i>j}|\lambda_i-\lambda_j|\left(\prod_{i=1}^n\exp(-n\lambda_i^2)\textrm{erfc}\left(|\lambda_i-\lambda_i^*|\sqrt{n/2}\right)\right)^{1/2}
\end{equation}
where the asterisk denotes the complex conjugate \comment{and $C_n$ is the normalization constant}. \comment{We denote in the sequel by $Q^n$ the associated probability measure (see equation~\eqref{eq:Qn}).}

A later breakthrough
was the characterization of the correlations between
eigenvalues in terms of Pfaffian processes in a series of studies
\cite{Forrester07, Sommers07, Forrester08, ForresterMays09, Sinclair09, SommersWieczorek08, BorodinSinclair09}. More recently, this picture has been further augmented
by the description of the distribution of the spectral radius and of
the largest real eigenvalue of this ensemble of matrices
\cite{rider2012extremal}.

Much work has also been devoted to the characterization of the spectrum of real Ginibre
matrices in the limit of $n \rightarrow \infty$. We denote by $\hat \mu^n$
the empirical spectral distribution of such matrices of size $n\times
n$ defined as:
\begin{equation}
\hat\mu^n=\frac{1}{n}\sum_{i=1}^{n}\delta(\lambda_i)
\end{equation}
where $\{\lambda_i\}_{i=1}^n$ are the eigenvalues of the matrix.
For $M^n$ a real Ginibre matrix scaled by $1/\sqrt{n}$, it is now well
known that $\hat\mu^n$ converges to the uniform distribution on the
unit disk as $n\rightarrow \infty$, a result known as circular law:
\begin{equation}
\hat\mu^n\rightharpoonup \frac{1}{\pi}\1_{|x|\leq1}\ .
\end{equation}
For real Ginibre matrices, this result was first demonstrated in
\cite{Edelman97}. It has been now established that the circular law is
universal in the sense that convergence of $\hat\mu^n$ to the uniform
distribution on the unit disk holds for matrices composed of i.i.d.
random variables with zero mean and $1/n$ variance \cite{Tao10}.
Furthermore, local properties, such as correlations between
eigenvalues are also universal in the $n\rightarrow \infty$ limit \comment{for matrices with independent elements with exponentially decaying distribution and moments matching the normal distribution up to fourth order~\cite{Tao12}}.

Large Deviations Principles
(LDP) for random matrices
were derived in \cite{Benarous97,Benarous98}, where it was shown that
the sequence of empirical measures $\{\hat\mu^n\}_{n\to\infty}$ of
$M^n$ satisfy a LDP with speed $n^{2}$ and rate function
\begin{equation}\label{eq:Ialpha}
\mathcal{I}[\mu]:
\begin{cases}
	\M&\mapsto \R\\
	\mu &\mapsto \frac12\left(\int x^ 2d\mu(x) - \int\int
\log|x-y|d\mu(x)d\mu(y)\right)-K
\end{cases}
\end{equation}
for symmetric Gaussian ensembles (i.e. Gaussian Orthogonal Ensemble) and the real Ginibre ensemble
respectively. In the former, the map $\mathcal{I}$ acts on
probability measures on the real axis $\M_1^+(\R)$ and
$K=\frac 3 8+\frac 1 4 \log 2$. In the later, the map $\mathcal{I}$
acts on probability measures on $\C$ symmetrical with respect to
complex conjugation: $ \M_1^S(\C)$, and $K= \frac 3 8$.

The distinctive feature of the spectrum of real Ginibre matrices is
that it has a non-zero probability of having real eigenvalues.
As shown in \cite{Edelman94,Edelman97,Ginibre65}, the empirical
spectral distribution of finite real Ginibre matrices has a
singularity on the real line because there is a positive probability
of having real eigenvalues. As the matrix size $n$ goes to infinity,
this singularity disappears because the expected number of real
eigenvalues is of order  ${\sqrt{n}}$. The first numerical reports on
this scaling appeared in \cite{Sommers88} and a rigorous proof for the
average number of real eigenvalues and higher order expansions in
\cite{Edelman94}. Its universality was established in \cite{Tao12}. Recent studies have provided a more detailed analysis of the distribution of real eigenvalues of real Ginibre matrices, notably the inter eigenvalue gap distribution \cite{Tribe11,Forrester13,Beenakker14}, which takes on an approximately semi-Poisson form in the bulk.

In \cite{Edelman97} the author
introduced the probabilities $p^n_k$ for a real Ginibre matrix of
size $n$ to have $k$ real eigenvalues and provided numerical estimates
for some special cases of $p^n_k$ and exact expressions for the
specific case $k=n$. \comment{Integral
expressions for these probabilities for small $n$ were derived\cite{Kanzieper}, enlarging the range of values
that could be numerically computed. An exact expression for $k_n=n-2$ and its large $n$ asymptotic behavior was derived in~\cite{akemann07} using the integrable structure of the real Ginibre ensemble and Pfaffian properties. Recently, fine asymptotic estimates of $p^n_k$ for $k$ small ($k=o(\log(n)/\sqrt{n})$) were analytically derived in~\cite{kanzieper2015}. To our knowledge, there is no asymptotic expression of $p^n_k$ for large $n$ and general $k$. }

Another very efficient method for the study of the spectra of the Gaussian $\beta$ ensembles was proposed in~\cite{Dyson62}. This pioneering work made a deep analogy between 1d log-gases, i.e. freely moving charged particles
with quadratic confinement and logarithmic repulsion, and the
eigenvalues of matrix-valued symmetric real Ornstein-Uhlenbeck
processes. In detail, the equilibrium distribution of a
one-dimensional log-gas at an inverse temperature $\beta$ is
precisely the distribution of the eigenvalues of the Gaussian $\beta$
ensemble (symmetric, hermitian or quaternionic random matrices), and
the equilibrium density of a two-dimensional log-gas is identical
to the distribution of eigenvalues of the complex Ginibre ensemble.
This link between interacting particle systems and spectra of random
matrices has proved an essential tool to demonstrate properties of the
spectrum of random matrices even in cases with extremely low
probability. As an example, the use of the log-gas for symmetric
matrices was instrumental in the characterization the spectrum of
random matrices with anomalous densities~\cite{majumdar2009index,majumdar2012number}, with applications to data analysis.

From the mathematical viewpoint, the existence and uniqueness of
solutions to 1d log-gas systems as well as the convergence as the
system size goes to infinity were proved in~\cite{Rogers93} and for a
more general class of gases in~\cite{Cepa97}. For more on log-gases we refer to \cite{Forrester10}.

\paragraph{Methods and summary of the main results.}

Our methods rely on the derivation of a LDP for the Ginibre ensemble conditioned on the proportion of real
eigenvalues $\alpha=k/n$.  Our contribution
here is to extend the LDP in \cite{Benarous97,Benarous98} to the situation that interpolates between the two cases presented
 to allow for measures that are supported both on the real axis and the complex plane. We find that such matrices satisfy a LDP with rate $n^2$ and rate function $\I$, where $K=\frac{3}{8}$.

From our LDP, we are able to show that when $k/n\rightarrow \alpha$ and $n\to\infty$, $\frac{1}{n^2}\log p^n_k$ scales as $\I[\mu_\alpha]-K$ where $\mu_\alpha$ is the minimizer of the rate function
$\mathcal{I}$ on the set $\mathcal{M}_{\alpha} = \{\nu \in \M_1^S(\C)\
;\ \nu(\R)=\alpha\}$. In particular, for the case $\alpha=1$ one can see that \[\frac{1}{n^2}\log p^n_n \to-\I[\mu_1]= -\frac{1}{4}\log 2\ ,\]
which coincides asymptotically with the \comment{exact formula derived in \cite{Edelman97} and with the formula of $p_{n-2}^n$ derived in~\cite{akemann07}; actually, our result shows that this logarithmic equivalent is valid for $p_{n-2r}^n$ for any $r\in \mathbbm{N}$ and not only for $r=0$ or $1$. } However, obtaining a closed form expression for the minimizer $\mu_\alpha$ is not straightforward. Nonetheless, we are able to derive a precise qualitative picture of the support
and shape of the minimizer through the use of a constrained
optimization problem~\cite{Anderson09}. 

To gain a deeper understanding on the minimizer, we next introduce and investigate the log-gas whose stationary distribution corresponds to the eigenvalue distribution of the class of matrices we are studying. 
 In contrast with the
existing literature, this log-gas is neither one nor two dimensional:
it is a mixture, in the complex plane, of the two types of gas, one
fraction of the particles being confined on a singular region of the
plane. We use this gas to obtain numerically for various values of $\alpha$, approximations of the distribution $\mu_\alpha$. 

The above steps characterize the macroscopic properties of the limit distribution $\mu_\alpha$. We complement these by a description of microscopic features. To this end, we use renormalization techniques inspired from hydrodynamics and the Ginzburg-Landau theory \cite{sandier20131d,Sandier13,rougerie2013higher} that were extended to investigate the microscopic organization of particles in log-gases with applications to the distribution of eigenvalues of the Ginibre ensemble. This approach has unveiled in particular a crystallization phenomenon in one dimension, and led to conjecture that particles in two-dimensional log-gases organize according to a regular triangular lattice in the zero temperature limit~\cite{Sandier13,sandier20131d}. We readily apply these methods to our mixed-type problem.

\paragraph{Organization of the paper.}
Since the complex eigenvalues of real matrices come in pairs of complex conjugates, matrices where $k$ and $n$ have different parity have probability 0. For this reason through the text we assume that the number of real eigenvalues $k$ has the same parity as $n$.  Also we introduce the notation $\overline{\alpha n}$ for the closest integer to $\alpha n$ with the same parity as $n$. 

The article is organized as follows. In section \ref{sec:LDP}, we
establish a specific LDP for real Ginibre matrices conditioned to
have $k=\overline{\alpha n}$ real eigenvalues and the asymptotic estimation of
$p^n_k$ in section \ref{sec:asymptotic_pnk}. We characterize more precisely the form of the distribution of real and complex eigenvalues minimizer in section \ref{sec:variational}. We introduce and analyze in
section \ref{sec:particlesystem} the 1d 2d log-gas whose stationary distribution is identical to the eigenvalues of a Ginibre matrix constrained on having a specific number of real eigenvalues. Finally, in section \ref{sec:renorm}, we derive the renormalized energy for the mixed gas and discuss its implications in terms of the distributions of the particles in the zero temperature limit.

\section{Large deviations principle for $k=\overline{\alpha n}$ and analysis of the rate function}\label{sec:LDP}

Consider $M^n_{k_n} \in \R^{n\times n}$ with $n\in\mathbb{N}$ a sequence of real Ginibre random matrices with $k_n$ real eigenvalues, such that $k_n$ has the same parity as $n$ and $k_n/n \to \alpha \in [0,1]$. 
The large deviations principles shown in \cite{Benarous97,Benarous98} correspond to $\alpha\in\{0,1\}$. We now show that they can be extended to $\alpha\in (0,1)$. To this purpose, we define $\M_{\alpha}$ as the subset of symmetrical probability measures $\M_1^S(\C)$ exactly charging a mass $\alpha$ to the real line:
\[\mathcal{M}_\alpha:=\{\mu \in \M_1^S(\C):\mu(\R)=\alpha\}\]
For fixed $\alpha\in (0,1)$ and finite $n$, the space $\M_{\alpha}$ contains empirical measures of matrices of size $n$ only if $\alpha n$ is an integer value. Therefore, this space is slightly too small in order to understand the convergence properties of the spectrum of sequences of random matrices asymptotically charging non-trivial mass on the real axis: the spectral density of matrices of size $n$ can only charge a mass proportional to $1/n$ to the real axis. In order to take into account these fluctuations of the mass on the real axis as $n$ is increased, we introduce the decreasing sequence of spaces:
\[\M_{\alpha}^{n}=\{\mu \in \M_1^S(\C):\vert \mu(\R)-\alpha \vert \leq \frac 1 {n} \}.\]
The limit of this sequence is exactly $\M_{\alpha}$, and for any $n\in \mathbb{N}$, the space $\M_{\alpha}^n$ contains all empirical spectral densities of matrices with a number of real eigenvalues $k=\overline{\alpha n}$. 
Throughout the paper we will use the classical L\'evy topology, which provides a metric for the topology of weak convergence\footnote{For the sake of completeness, we recall that the distance between two probability measures $(\mu,\nu)$ on $\mathbbm{C}$ is defined as:
\[d_L(\mu,\nu)=\inf\{\delta>0\,;\,\mu(A)\leq\nu(A^\delta)\mbox{ and }\nu(A)\leq\mu(A^\delta)\quad \forall A\in\mathcal{B}(\mathbb{C}) \}\]
where $\mathcal{B}(\mathbb{C}$ is the Borel algebra of $\mathbbm{C}$ and for $A\in \mathcal{B}(\mathbbm{C})$, $A^\delta=\{x;d(x,A)\leq\delta\}$.}.
We will be interested in sequences of random matrices $(M^n)_{n\geq 0}$ whose spectral density belongs to 
$\M_{\alpha}^{n}$. These matrices satisfy the following estimates:

\begin{thm}\label{thm:LDP}
	For any $\nu \in \M_\alpha$, we have:
\begin{equation*}
	\begin{cases}
		\displaystyle{\lim_{\delta \searrow 0} \; \limsup_{n\to\infty}\;\frac 1 {n^2}\log\P[\hat{\mu}^n \in \M_{\alpha}^{n}\cap B(\nu,\delta)] \leq -\mathcal{I} [\nu]}\\
		\ \\
		\displaystyle{\lim_{\delta \searrow 0} \; \liminf_{n\to\infty} \; \frac 1 {n^2}\log\P[\hat{\mu}^n \in \M_{\alpha}^{n}\cap B(\nu,\delta)] \geq -\mathcal{I}[\nu]}\ .
	\end{cases}
\end{equation*}
where $B(\nu,\delta)$ is the L\'evy ball of radius $\delta$ centered at $\nu$ and $K=\frac{3}{8}$.
\end{thm}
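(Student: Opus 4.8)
The plan is to run the classical Coulomb-gas large deviations argument on the explicit joint density \eqref{eq:pdf}, restricted to the sector of configurations with exactly $k=\overline{\alpha n}$ real eigenvalues --- which is precisely the event $\{\hat\mu^n\in\M_\alpha^n\}$, since the spectral measure of any real $n\times n$ matrix charges $\R$ with mass $(\#\text{ real eigenvalues})/n$. First I would rewrite the density as an exponential of an energy functional of $\hat\mu^n$. Decomposing a configuration into its $k$ real points $x_1,\dots,x_k$ and the $m=(n-k)/2$ points $z_1,\dots,z_m$ of the open upper half-plane together with their conjugates, three elementary facts do the bookkeeping: $\prod_{i>j}|\lambda_i-\lambda_j|=\exp\!\big(\tfrac{n^2}{2}\iint_{z\ne w}\log|z-w|\,d\hat\mu^n d\hat\mu^n\big)$; the identity $z_j^2+\bar z_j^2=|z_j|^2+|\bar z_j|^2-4(\mathrm{Im}\,z_j)^2$, which turns the Gaussian weight $\big(\prod_i e^{-n\lambda_i^2}\big)^{1/2}$ into $\exp\!\big(-\tfrac{n^2}{2}\int|z|^2 d\hat\mu^n+2n\sum_j(\mathrm{Im}\,z_j)^2\big)$; and the two-sided bound $c\,e^{-t^2}/(1+t)\le\textrm{erfc}(t)\le e^{-t^2}$, valid for all $t\ge0$, which makes $\prod_j\textrm{erfc}(\sqrt{2n}\,\mathrm{Im}\,z_j)$ equal to $\exp\!\big(-2n\sum_j(\mathrm{Im}\,z_j)^2\big)$ up to the subexponential factor $\prod_j(1+\sqrt{2n}\,\mathrm{Im}\,z_j)^{-1}$. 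The apparently dangerous $e^{+2n\sum_j(\mathrm{Im}\,z_j)^2}$ thus cancels exactly, and on the $k=\overline{\alpha n}$ sector
\[
p[\lambda_1,\dots,\lambda_n]=C_n\exp\!\Big(-n^2\Big[\tfrac12\textstyle\int|z|^2 d\hat\mu^n-\tfrac12\iint_{z\ne w}\log|z-w|\,d\hat\mu^n d\hat\mu^n\Big]+r_n\Big),
\]
with $r_n$ collecting the labelling combinatorics and $-\sum_j\log(1+\sqrt{2n}\,\mathrm{Im}\,z_j)$; a Cauchy--Schwarz estimate using $\int|z|^2 d\hat\mu^n\le R$ shows $r_n=o(n^2)$ uniformly on bounded configurations. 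Finally $\tfrac1{n^2}\log C_n\to\tfrac38$ --- the value of the bracketed functional $\mathcal{J}[\mu]:=\tfrac12\int|z|^2 d\mu-\tfrac12\iint_{z\ne w}\log|z-w|\,d\mu d\mu$ at the circular law, which can be read off from the unconstrained real-Ginibre LDP of \cite{Benarous98} (or from direct asymptotics of $C_n$) --- and this is what fixes $K=\tfrac38$.

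For the \textbf{upper bound}, bound $\textrm{erfc}(t)\le e^{-t^2}$ to discard the error-function factor, so that the density is at most $C_n\exp(-n^2\mathcal{J}[\hat\mu^n]+o(n^2))$. The inequality $\log|z-w|\le\log2+\tfrac12\log(1+|z|^2)+\tfrac12\log(1+|w|^2)$, with $\log(1+t^2)\le\varepsilon t^2+C_\varepsilon$, gives exponential tightness, so one restricts to $\{\int|z|^2 d\hat\mu^n\le R\}$ (Lebesgue volume $e^{o(n^2)}$) and bounds $\P[\hat\mu^n\in\M_\alpha^n\cap B(\nu,\delta)]$ by that volume times the supremum of the density. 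By the usual double truncation of the logarithm (lower truncation at $-M$ to make the energy lower semicontinuous, using the confinement to control the tail), the upper semicontinuity of $\mu\mapsto-\int|z|^2 d\mu$, and the lower semicontinuity of $\mathcal{J}$, taking $n\to\infty$, $\delta\searrow0$, $R\to\infty$, $M\to\infty$ yields $\lim_{\delta\searrow0}\limsup_n\tfrac1{n^2}\log\P[\hat\mu^n\in\M_\alpha^n\cap B(\nu,\delta)]\le\tfrac38-\mathcal{J}[\nu]=-\mathcal{I}[\nu]$, the case $\mathcal{I}[\nu]=+\infty$ being automatic.

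For the \textbf{lower bound} I would first reduce, by a regularisation argument, to measures $\nu=\alpha\rho_\R(x)\,dx+(1-\alpha)\rho_\C(z)\,dA(z)$ with $\rho_\R,\rho_\C$ bounded and compactly supported and $\rho_\C$ supported at distance $\ge\delta_0>0$ from $\R$; the delicate point is that the approximants must remain in $\M_\alpha$, which is arranged by mollifying the real and the complex parts \emph{separately} (a two-dimensional mollifier applied to the complex part, which already puts zero mass on $\R$, produces a measure still putting zero mass on $\R$) after first truncating the complex part away from the axis, and then checking that $\mathcal{I}$ is continuous along this procedure. For such $\nu$, I would place reference points $x_i^*\in\R$ and $z_j^*$ in the upper half-plane, pairwise separated by $\ge c/n$, whose empirical measure converges weakly to $\nu$ and for which $\tfrac1{n^2}\sum_{i\ne j}\log|\lambda_i^*-\lambda_j^*|\to\iint_{z\ne w}\log|z-w|\,d\nu d\nu$ (possible since $\nu$ has bounded density and finite logarithmic energy), and integrate the density over the polydisk of radii $\sim c'/n$ about this configuration: it lies in $\M_\alpha^n$ and, for $n$ large, in $B(\nu,\delta)$, and there the pairwise distances, the Gaussian weights, and the error functions $\textrm{erfc}(\sqrt{2n}\,\mathrm{Im}\,z_j^*)\ge c\,e^{-2n(\mathrm{Im}\,z_j^*)^2}/\sqrt n$ (valid because $\mathrm{Im}\,z_j^*\ge\delta_0/2$ on the polydisk) are all controlled with $o(n^2)$ error --- the logarithm being handled by splitting pairs into those at mutual distance $\le\rho$ (at most $O(n^2\rho)$ of them, bounded crudely via $|\lambda_i-\lambda_j|\ge\tfrac12|\lambda_i^*-\lambda_j^*|$) and those at distance $>\rho$ (perturbation $O(1/n)$ per pair), and sending $\rho\searrow0$ at the end. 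Together with the polydisk volume $e^{o(n^2)}$ this yields $\lim_{\delta\searrow0}\liminf_n\tfrac1{n^2}\log\P[\hat\mu^n\in\M_\alpha^n\cap B(\nu,\delta)]\ge-\mathcal{I}[\nu]$.

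The main obstacle, and the genuinely new ingredient relative to \cite{Benarous97,Benarous98}, is the control of the error-function factor \emph{uniformly} in the positions of the complex eigenvalues, in particular when some of these approach the real axis --- where $\textrm{erfc}(\sqrt{2n}\,\mathrm{Im}\,z)$ is of order one rather than exponentially small. One must check simultaneously that this factor, together with the Gaussian weight, never produces an excess of order $n^2$ that would spoil the upper bound (the exact cancellation of the $e^{\pm2n\sum_j(\mathrm{Im}\,z_j)^2}$ terms), and that the topological constraint $\nu(\R)=\alpha$ is compatible with a reference configuration whose complex points remain at macroscopic distance from $\R$, so that $\textrm{erfc}$ costs only an $o(n^2)$ correction in the lower bound. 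Getting both estimates to hold, with the \emph{same} constant $K=\tfrac38$ as in the unconstrained case, is where the bulk of the technical work lies.
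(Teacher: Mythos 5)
Your proposal is correct and follows essentially the same route as the paper: the upper bound via the truncated kernel $f_\omega$, exponential tightness and semicontinuity as in \cite{Benarous98}, and the lower bound via an explicit well-separated reference configuration with exactly $k=\overline{\alpha n}$ real points, integrated over small balls/intervals, with the new ingredient in both directions being exactly the cancellation of the $e^{\pm 2n\sum_j(\mathrm{Im}\,z_j)^2}$ terms between the Gaussian weight and the $\textrm{erfc}$ factor. The only (immaterial) difference is that you mollify $\nu$ first and then pick a generic discretization, whereas the paper reuses the quantile-interval and dyadic-square construction of \cite{Benarous97,Benarous98} directly.
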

This property is more general than a Large Deviation Principle (LDP) in that in allows to take into account the constraint on the asymptotic proportion of real eigenvalues, which vary with the sequence index. We denote the half complex plane as $\mathbb{H}=\{z\in\C; \Im(z)>0\}$.

\begin{proof}
	The proof is based on a combination of evaluations and methods proposed in~\cite{Benarous97,Benarous98} in order to prove large deviations principles for the Wishart or Ginibre ensembles. The first inequality (upper bound) is proved directly by considering the joint eigenvalue density. \comment{For the sake of completeness, we provide a sketch of the proof in Appendix~\ref{sec:upperbound} highlighting the main differences with the one in~\cite{Benarous98} .} 

	The lower bound is slightly more complex. In~\cite{Benarous97}, the authors propose an original construction of a particular set of points on the real line, from which they construct a measure whose probability compares to the rate function and lower bounds the probability we aim at controlling. This construction was generalized in~\cite{Benarous98} where the points now belong to $\H$. For our purposes, a mixed construction both on the real line and on $\H$ proves necessary to ensure the property stated in the theorem. \comment{To this purpose, we define $l=(n-k)/2$ the number of complex eigenvalues in $\mathbb{H}$ and the measures $\nu_R(z)=\nu(z)\1_{z\in\R}$ and $\nu_C(z)=\nu(z)\1_{z\in\C\setminus\R}$.} The construction, schematically described in Fig~\ref{fig:construction}, proceeds as follows. By the continuity properties of the rate function proved in~\cite{Benarous98}, we can assume that $\nu_C$ has no atom with continuous and everywhere positive density with respect to Lebesgue's measure on $\H$, and 
we can define a square $\mathcal{H}$ in $\H$ containing at least $(1-\alpha)(1-1/n)$ of the mass.  Similarly for $\nu_R$ on $\R$, one can define a bounded interval $\mathcal{R}$ 
containing 
a mass at least equal to $\alpha(1-1/n)$. The square $\mathcal{H}$ (resp. the interval $\mathcal{R}$) can be decomposed into $D$ disjoint squares $\{B_l\}_{l\in \{1,\cdots, D\}}$ (resp. $\delta$ disjoint intervals) of length proportional to $1/\sqrt{n}$ (resp. $1/n$), in each of which are set a fixed number of complex (resp. real) points based on the density $\nu$ in each of these intervals. These points are denoted $(\tilde{\lambda}_1<\cdots<\tilde{\lambda}_k) \in \R^k$ and $(\tilde{Z}_1,\cdots, \tilde{Z}_l) \in \H^l$ (see~\cite{Benarous97,Benarous98} for the details of the construction). The important information on these points is that:
	\renewcommand{\theenumi}{(\roman{enumi})}
	\begin{enumerate}
		\item $\tilde{\lambda}_i$ are the boundaries of the intervals (\comment{remark that the $\lambda_i$ could alternatively be fixed as quantiles of the distribution $\nu_R$}),
		\item the number of points $\tilde{Z}_j$ is related to the mass contained in the square they are contained in,
		\item the distance to the boundary of the square, as well as distances between two points, are lower bounded by $C/\sqrt{n}$ for some constant $C$.
	\end{enumerate}
	\begin{figure}[htbp]
		\centering
			\includegraphics[width=.7\textwidth]{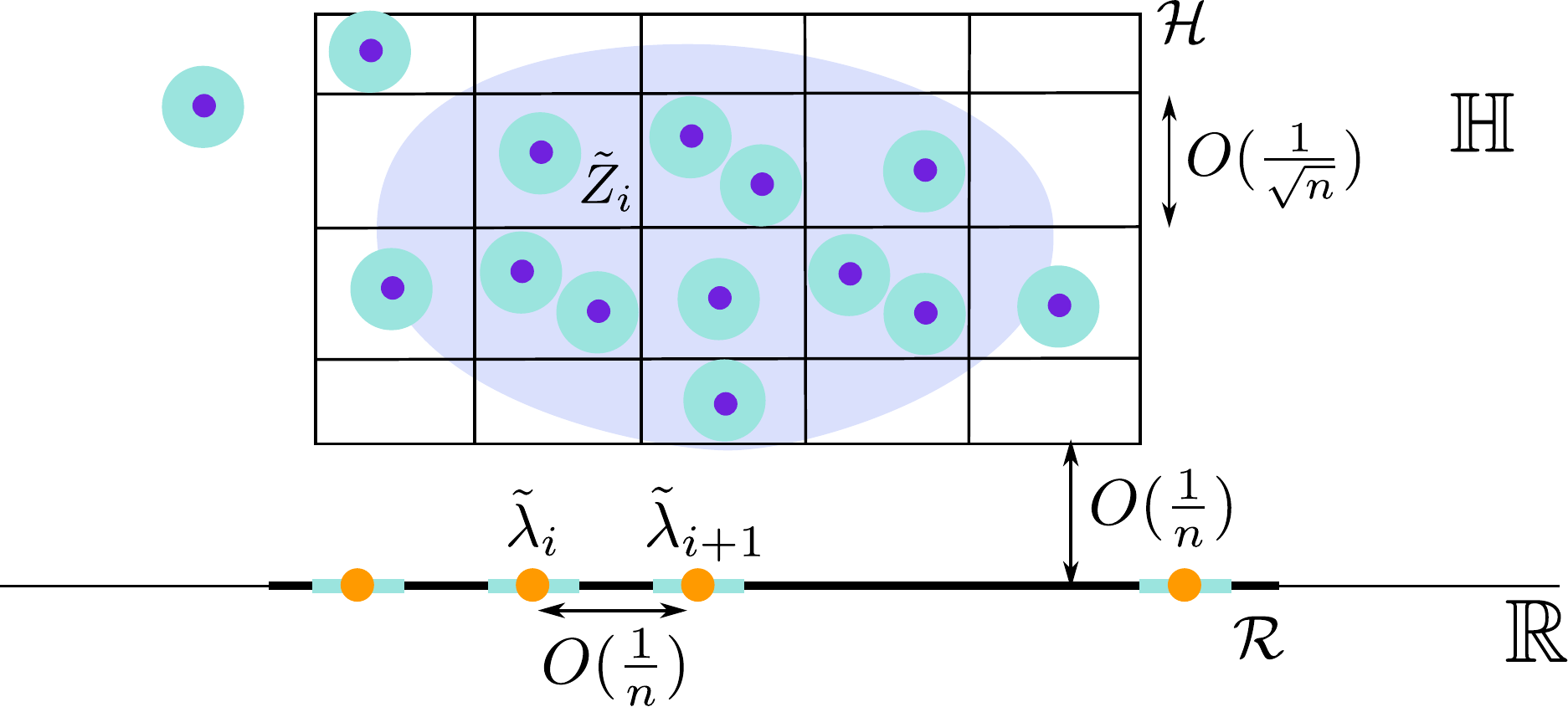}
		\caption{The artificial empirical measure constructed for the 
lower bound in the ball centered at $\nu$. The density of $\nu$ charges both $\H$ 
(light blue region) and a part on $\R$ (bold black line). The square $\mathcal{H}
\subset \H$ contains in its interior a mass greater $(1-\alpha)(1-1/n)$ and is partitioned
into smaller squares of typical size of order $1/\sqrt{n}$. Atoms $\tilde{Z}_i$ are blue circles, 
surrounded by green balls of size $\varepsilon/n$, and together form the space 
$D_{\varepsilon}$. Along the real line, orange circles are the atoms $\tilde{\lambda}_i$ and 
green intervals are of amplitude $\delta$.}
		\label{fig:construction}
	\end{figure}
	Note that such a construction yields an empirical measure whose mass is smaller that one. This is completed by adding some points outside $\mathcal{H}$, satisfying the same conditions of distance to boundaries and between points. Thus the constructed empirical distribution is close from $\nu$, in the sense that the distance (in total variation) between these two measures is arbitrarily small as soon as taking into account a sufficient number of points $(\tilde{\lambda}_i)$ and a sufficiently fine partition of $\H$. Around these points, we can define small intervals $[\tilde{\lambda}_i-\delta/2,\tilde{\lambda}_i+\delta/2]$ and for sufficiently small $\varepsilon>0$, non-overlapping balls centered at $\tilde{Z}^i$ with radius $\varepsilon/n$. The union of these balls is denoted $D^{\varepsilon}$. From this construction, following exactly the same algebra as in the pure real or complex case, we obtain: \\
	\begin{align*}
			Q^n(B(\nu,\delta)\cap \M_{\alpha}^n) \geq B_n D_n \exp\Bigg((-l +\frac 1 {\sqrt{l}}) \sum_{j=1}^{l/2} \vert \tilde{Z}_j\vert^2 +\sum_{i\neq j} \log\vert \tilde{Z}_i - \tilde{Z}_j \vert \vert \tilde{Z}_i^* - \tilde{Z}_j^* \vert - l(l+k)\log(1-\frac{2\varepsilon}{C})\\
			 - k\sum_{i=1}^k (\vert \tilde{\lambda}_i\vert +\delta)^2+\sum_{i\neq j} \log\vert \tilde{\lambda}_i-\tilde{\lambda}_j\vert + 2 \sum_{i=1}^k \sum_{j=1}^{l/2} \log\vert \tilde{\lambda}_i-\tilde{Z}_j\vert
			\Bigg)\\
			 \geq B_n D_n \exp\Bigg((k+1)^2 \int \log\vert x-y\vert d\nu_R(x)d\nu_R(y) -(k+1)^2 \int x^2 d\nu_R(x)\\
			 - l \sum_{i=1}^{l/2} \vert \tilde{Z}_j \vert^2 +\frac{l^2}{2}\int\int \log\vert x-y\vert d\nu_C(x)d\nu_C(y)+2kl\int\int \log\vert x-y\vert d\nu_C(x)d\nu_R(y) + R(\delta,\varepsilon,n)\Bigg)
	\end{align*}
	where $C$ denotes a constant independent of $n$ (possibly depending on the parameters $\delta$ and $\varepsilon$), $B_n$ a constant depending on all parameters such that $ \log(B_n)=o(n^2)$, $D_n$ a constant satisfying the same property as in the proof of the upper bound ($1/n^2 \log(D_n)\to K$), and $R(\delta,\varepsilon,n)$ a term tending to zero when $\delta$ or $\varepsilon$ tend to zero, and is negligible compared to $n^2$ as $n\to\infty$ (this term is obtained explicitly). From this expression, it is now easy to see that :
	\begin{align*}
		\frac 1 {n^2} \log(Q^n(B(\nu,\delta)\cap \M_{\alpha}^n)) & \geq \frac {\log(C_n)} {n^2} + \frac{(k+1)^2}{n^2} \Bigg(\int \vert x\vert^2 d\nu_R(x) + \int\int\log\vert x-y\vert d\nu_R(x)d\nu_R(y)\Bigg)\\
		& \qquad + \frac {l^2}{n^2}\Bigg(\int \vert x\vert^2 d\nu_C(x) + \int\int\log\vert x-y\vert d\nu_C(x)d\nu_C(y)\Bigg)\\
		& \qquad +\frac {2kl}{n^2}\int\int\log\vert x-y\vert d\nu_C(x)d\nu_C(y) + o(1)
	\end{align*}
	and therefore in the limit $n\to \infty$ we conclude that:
	\begin{align*}
		\lim_{\delta \searrow 0}\liminf_{n\to \infty} \frac 1 {n^2} \log(Q^n(B(\nu,\delta)\cap \M_{\alpha}^n))  \geq & K - \alpha \int \vert x\vert^2 d\nu_R(x) + \alpha^2 \int\int\log\vert x-y\vert d\nu_R(x)d\nu_R(y)\\
		&  - (1-\alpha)\int \vert x\vert^2 d\nu_C(x) + (1-\alpha)^2\int\int\log\vert x-y\vert d\nu_C(x)d\nu_C(y)\Bigg)\\
		&  + 2\alpha (1-\alpha)\int\int\log\vert x-y\vert d\nu_C(x)d\nu_C(y)\\
		 = &\I[\alpha \nu_R+(1-\alpha)\nu_C]
	\end{align*}
	which ends the proof.
\end{proof}

Similarly to a LDP, the above theorem readily ensures the following
\begin{thm}
The rate function $\I$ has a unique minimizer $\mu_\alpha$ on the space $\M_{\alpha}$. The sequence of empirical measures \comment{ $\hat{\mu}^n\in\mathcal{M}_{\alpha}^n$} converges in the L\'evy topology towards $\mu_\alpha$.
\end{thm}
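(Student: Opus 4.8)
The plan is to deduce the statement from standard properties of the weighted logarithmic energy $\I$ together with the estimates of Theorem~\ref{thm:LDP}. First I would settle existence and uniqueness. With $K=\frac38$, $\I$ is a good rate function on $\M_1^S(\C)$ --- proper, lower semicontinuous for the L\'evy topology, with compact sublevel sets $\{\I\le C\}$ (the confinement $\int|z|^2d\mu$ supplying tightness, exactly as for the Ginibre energy in \cite{Benarous98}) --- and it is strictly convex: $\mu\mapsto\frac12\int|z|^2d\mu$ is affine, while for $Q(\mu)=-\int\int\log|z-w|\,d\mu(z)d\mu(w)$ one has $\frac12(Q(\mu_0)+Q(\mu_1))-Q(\frac{\mu_0+\mu_1}{2})=\frac14 Q(\mu_0-\mu_1)>0$ whenever $\mu_0\neq\mu_1$ are of finite energy, because $\mu_0-\mu_1$ is then a non-zero neutral signed measure and the logarithmic energy is positive definite on those. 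The set $\M_\alpha$ is convex but not weakly closed --- in a weak limit the mass of a complex part may collapse onto $\R$ --- so I would work with the closed convex set $\widetilde{\M}_\alpha:=\{\mu\in\M_1^S(\C):\mu(\R)\ge\alpha\}\supset\M_\alpha$, closedness coming from the portmanteau theorem ($\R$ closed in $\C$). As $\widetilde{\M}_\alpha$ contains finite-energy measures, $\inf_{\widetilde{\M}_\alpha}\I<\infty$, a minimizing sequence eventually lies in a compact sublevel set, and lower semicontinuity produces a minimizer $\mu_\alpha$, unique by strict convexity.

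The crux is then that this minimizer saturates the constraint, $\mu_\alpha(\R)=\alpha$; granted this, $\mu_\alpha\in\M_\alpha$ and, being optimal over the larger set $\widetilde{\M}_\alpha$, is a fortiori the unique minimizer over $\M_\alpha$. To prove saturation, suppose $\mu_\alpha(\R)=\beta>\alpha$; then for every $\nu\in\M_1^S(\C)$ and $t\in(0,1-\alpha/\beta]$ the interpolant $(1-t)\mu_\alpha+t\nu$ has real mass $\ge(1-t)\beta\ge\alpha$, so it lies in $\widetilde{\M}_\alpha$, and minimality together with convexity gives $\I[\mu_\alpha]\le\I[(1-t)\mu_\alpha+t\nu]\le(1-t)\I[\mu_\alpha]+t\,\I[\nu]$, i.e.\ $\I[\mu_\alpha]\le\I[\nu]$ for every $\nu$. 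But then $\mu_\alpha$ is the global minimizer of $\I$ on $\M_1^S(\C)$, namely the circular law $\frac1\pi\1_{|z|\le1}$, which puts no mass on $\R$ --- contradicting $\beta>\alpha>0$.

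For the second assertion I would run the usual argument passing from the upper bound of Theorem~\ref{thm:LDP} to a law of large numbers. Since $\hat\mu^n\in\M_\alpha^n$ by construction, it suffices to show $\P[\hat\mu^n\in\M_\alpha^n,\ d_L(\hat\mu^n,\mu_\alpha)\ge\epsilon]\le e^{-cn^2}$ for each fixed $\epsilon>0$ and some $c>0$: Borel--Cantelli then gives $\hat\mu^n\to\mu_\alpha$ almost surely, and intersecting over $\epsilon=1/m$ concludes. By exponential tightness --- which transfers from the unconditioned ensemble, the lower bound of Theorem~\ref{thm:LDP} serving to keep $\P[\hat\mu^n\in\M_\alpha^n]$ from being too small --- one discards, up to a super-exponentially small error, the event that $\hat\mu^n$ leaves a fixed compact sublevel set $\{\I\le L\}$. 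The remaining set $\{\nu\in\widetilde{\M}_\alpha:\I[\nu]\le L,\ d_L(\nu,\mu_\alpha)\ge\epsilon\}$ is compact, and since $\mu_\alpha$ is the unique minimizer of $\I$ on $\widetilde{\M}_\alpha$, there $\I\ge\I[\mu_\alpha]+c$; covering by finitely many L\'evy balls and applying the upper bound of Theorem~\ref{thm:LDP} to each ball yields the decay. The subtlety is that Theorem~\ref{thm:LDP} gives the local bound only at centres in $\M_\alpha$: for a ball centred at $\nu\in\widetilde{\M}_\alpha\setminus\M_\alpha$ I would first replace $\nu$ by $\nu^{(\eta)}\in\M_\alpha$ obtained by displacing the excess real mass $\nu(\R)-\alpha$ symmetrically to heights $\pm\eta$ and let $\eta\to0$, using $\I[\nu^{(\eta)}]\to\I[\nu]$ and $d_L(\nu^{(\eta)},\nu)\to0$ --- a continuity property of exactly the kind already established in \cite{Benarous97,Benarous98} and used in the proof of Theorem~\ref{thm:LDP}.

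The genuine work sits in these two places --- the saturation of the constraint (handled by convexity, crucially using that the unconstrained equilibrium measure, the circular law, lives off the real axis) and the extension of the local upper bound from $\M_\alpha$ to $\widetilde{\M}_\alpha$; the rest is the standard machinery of convex good rate functions and Borel--Cantelli. I expect the upper-bound extension to be the main obstacle, being the only step that is not purely soft.
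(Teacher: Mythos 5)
Your proposal follows the same overall architecture as the paper's proof (goodness and strict convexity of $\I$ for existence/uniqueness, exponential tightness plus the local upper bound for convergence), but you handle one point more carefully, and correctly. The paper asserts that $\M_\alpha$ is a closed subset of $\M_1^S(\C)$; this is in fact false in the L\'evy topology, since complex mass can collapse onto the real axis in the weak limit (only $\mu(\R)\geq\alpha$ survives, by portmanteau), and you are right to replace $\M_\alpha$ by the closed convex set $\{\mu:\mu(\R)\geq\alpha\}$. Your saturation argument --- interpolating from a putative minimizer with $\mu_\alpha(\R)=\beta>\alpha$ towards an arbitrary $\nu$ and concluding it would have to be the circular law --- is a clean, self-contained way to recover $\mu_\alpha(\R)=\alpha$; the paper instead reaches the identity $\inf_{\M_\alpha}\I=\inf_{\M_\alpha^+}\I$ only later, as a remark deduced from the monotonicity of $\alpha\mapsto\I[\mu_\alpha]$, so your route buys a repair of the existence/uniqueness step that the paper's own ordering does not provide. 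For the convergence step the two arguments coincide (the paper's tightness estimate via $\log|x-y|\leq 2+(|x|^2+|y|^2)/4$ is exactly the bound you invoke); your extra $\eta$-regularization to extend the local upper bound to centres with $\nu(\R)>\alpha$ is more work than strictly necessary, because the appendix's upper bound is really of the form $-\inf_{B(\nu,\delta)\cap\M_\alpha}\I$ and is valid for balls with arbitrary centres, but it is harmless. You are also right, where the paper is silent, that the deviation probability must be normalized by $p^n_{\overline{\alpha n}}$, which the lower bound of Theorem~\ref{thm:LDP} controls.
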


\begin{proof}
We start by noting that the space $\M_{\alpha}$ is convex since any convex combination of elements of $\M_{\alpha}$ remains in $\M_{\alpha}$. Moreover, it is a closed subset of $\M_1^S(\C)$. \comment{The convexity of $\I$ \cite{Benarous97,Benarous98} ensures that} the map $\I$ restricted to $\M_{\alpha}$ remains lower-semicontinuous, and has compact level sets in $\M_{\alpha}$. In particular, it implies that the map $\I$ is strictly convex on $\M_{\alpha}$, guaranteeing that there exists a unique minimum $\mu_{\alpha}$ in $\M_{\alpha}$. In order to ensure that sequences of empirical processes in $\M_{\alpha}^n$ converge towards $\mu_{\alpha}$ exponentially fast 
with speed $n^2$, we further need to ensure that these sequences are tight. That property is, again, a consequence of the analogous properties proved in the case of one-dimensional log-gases~\cite{Benarous97} and on the the unconstrained Ginibre ensemble~\cite{Benarous98} (seen as a consequence of the first result). Here, the elegant proof proposed in~\cite{Benarous98} readily extends to our constrained case and allows ensuring the exponential tightness of sequences of random matrices with spectral distribution constrained to belong to $\M_{\alpha}^n$, as we now outline. Defining for $K\in [0,1]$ and $r>0$ the compact set $A_{K,r}^n$, subset of $\M_{\alpha}^{n}$, as:
\[A_{K,r}^n = \{\mu\in \M_{\alpha}^n \;;\; \mu(B_r^c) > K\},\]
where $B_r$ is the complex centered ball with radius $r$, and $A_{K,r} = \{\mu\in \M_{1}^S(\C) \;;\; \mu(B_r^c) > K\}$ it is easy to see, using the inequality $\log(\vert x-y\vert)\leq 2+(\vert x\vert^2+\vert y \vert^2)/4$, that:
\begin{align*}
	Q^n(\hat{\mu}^n\in A_{K,r}^n) &\leq g_n \int \prod{i=1}^k d\lambda_i \prod_{j=1}^l dx_jdy_j \mathbbm{1}_{A_{K,r}} \exp\left(-\frac{n^2}{4} \int (\vert z\vert^2+\vert z'\vert ^2)d\tilde{\mu}(z,z')\right)\\
	&\leq g_n \exp\left(-\frac{Kn^2 r^2} 4\right)
\end{align*} 
where $\tilde{\mu}$ is the empirical measure with atoms at $\lambda_i$ and $x_j \pm \mathbf{i} y_j$ and $\log(g_n)=O(n^2)$. This inequality readily yields the desired of exponential tightness including the constraint that empirical distributions belong to $\M_{\alpha}^n$.

We have therefore built up all necessary ingredients for proving convergence of our constrained ensembles. Indeed, classical Large Deviations theory show exponentially fast convergence of a sequence of empirical measures under the condition that the sequence is exponentially tight and satisfies a LDP with good rate function (the sequence converges towards the unique minimum of the rate function). Here, analogous properties were proved under the constraint that each element of the sequence belongs to $\M_{\alpha}^n$, and the classical proof ensures convergence of the constrained measures towards the minimum of the rate function on $\M_{\alpha}$. 

\end{proof}

This method of proof of the convergence of the sequence of empirical measures $\hat{\mu}^n$ to $\mu_{\alpha}$ the unique minimizer of $\mathcal{I}$ on $\M_{\alpha}$ goes beyond the case of constraining the Ginibre ensemble to the number of real eigenvalues, and provides an account for so-called \emph{log-gas method} which proved very efficient for the understanding of rare events in the Gaussian and Wishart $\beta$-ensembles~\cite{majumdar2009index,majumdar2012number} or the Ginibre ensemble~\cite{allez2014index}.

In our purpose to characterize Ginibre matrices with prescribed proportion of real eigenvalues, we therefore need to find the minimizer $\mu_{\alpha}$ of the rate function in $\M_{\alpha}$. If this is possible, we can access (i) to the typical distribution of eigenvalues under our constraint and (ii) to the probability of these events at leading order, logarithmically equivalent to $\exp(-n^2 \I[\mu_{\alpha}])$. Our efforts will therefore now be devoted to the characterization of these distributions.

\section{Asymptotic behavior of $p^n_{\alpha n}$}\label{sec:asymptotic_pnk}
We now use the LDP in order to characterize the asymptotic behavior of the probability $p^n_{\alpha n}$ of having asymptotically $k\sim \alpha n$ real eigenvalues. For $M^n$ a random matrix of size $n\times n$ from the real Ginibre ensemble, we recall that
\begin{equation}
	p^n_k = \P[ M^n \mbox{ has }k\mbox{ real eigenvalues}] = \P[\hat{\mu}^n(\R)=k/n]
\end{equation}
where $\hat{\mu}^n$ is the empirical spectral distribution of $M^n$. To account for the discrete nature of the mass on the real line and for the parity of the number of complex eigenvalues, we consider the following quantity:
\begin{equation}
	p^n_{\overline{\alpha n}} = \P[|\hat{\mu}^n(\R)-\alpha|\leq 1/n] = \P[\hat{\mu}^n \in \mathcal{M}_{\alpha}^n]\ .
\end{equation}
In order to characterize this quantity, it is tempting to apply directly the large deviation principle of~\cite[Theorem 1.1]{Benarous98} on the set $A=\mathcal{M}_{\alpha}$ which is closed with empty interior. One therefore obtains the upper bound:
\begin{equation}
	\limsup_{n\to \infty}\frac{1}{n^2} \log \P[\hat{\mu}^n(\R) \in A] \leq - \inf_{\nu \in A} \I[\nu] = -\I[\mu_{\alpha}]\ .
\end{equation}
Here one faces two difficulties. First, because of finite size effects the quantity $\P[\hat{\mu}^n(\R) \in A]$ is not exactly equal to $\P[\hat{\mu}^n(\R) \in \mathcal{M}_{\alpha}^n]$. In fact, if $\alpha$ is irrational $\log\P[\hat{\mu}^n(\R) \in A] = -\infty$ for any $n$. Second, even if this problem is solved, only an upper bound holds.

A much better estimate can be achieved using the results of theorem~\ref{thm:LDP}:
\begin{corollary}
For any $\alpha \in (0,1)$, we have:
\begin{equation}\label{eq:pnan}
	\lim_{n\to \infty} \frac{1}{n^2} \log p^n_{\overline{\alpha n}}=   \lim_{n\to \infty} \frac{1}{n^2} \log \P[\hat{\mu}^n \in \mathcal{M}_{\alpha}^n] =  - \mathcal{I}[\mu_{\alpha}]
\end{equation}
\end{corollary}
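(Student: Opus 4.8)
The plan is to bound $\frac{1}{n^2}\log p^n_{\overline{\alpha n}}=\frac{1}{n^2}\log\P[\hat\mu^n\in\M_\alpha^n]$ both from below and from above by $-\I[\mu_\alpha]$, using Theorem~\ref{thm:LDP} for the lower bound and a closed-set large deviations estimate for the upper bound. Recall from the preceding theorem that $\mu_\alpha$ is the unique minimiser of $\I$ on $\M_\alpha$, so that $\I[\mu_\alpha]=\inf_{\nu\in\M_\alpha}\I[\nu]$.

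\emph{Lower bound.} I would apply the second inequality of Theorem~\ref{thm:LDP} at $\nu=\mu_\alpha$. Since $\M_\alpha^n\cap B(\mu_\alpha,\delta)\subseteq\M_\alpha^n$, monotonicity of $\P$ gives, for every $\delta>0$,
\[
\liminf_{n\to\infty}\frac{1}{n^2}\log\P[\hat\mu^n\in\M_\alpha^n]\ \ge\ \liminf_{n\to\infty}\frac{1}{n^2}\log\P\big[\hat\mu^n\in\M_\alpha^n\cap B(\mu_\alpha,\delta)\big],
\]
and letting $\delta\searrow0$ the right-hand side is $\ge-\I[\mu_\alpha]$ by Theorem~\ref{thm:LDP}.

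\emph{Upper bound.} Here I would avoid applying a large deviations principle to $\M_\alpha$ itself (which is problematic, as noted in the text) and instead use the enlargement $F_n:=\{\nu\in\M_1^S(\C):\nu(\R)\ge\alpha-1/n\}$, which contains $\M_\alpha^n$, is closed for the weak topology (the map $\nu\mapsto\nu(\R)$ being upper semicontinuous since $\R$ is closed in $\C$) and decreases to $F:=\{\nu\in\M_1^S(\C):\nu(\R)\ge\alpha\}$. For every $m\le n$ one has $p^n_{\overline{\alpha n}}=\P[\hat\mu^n\in\M_\alpha^n]\le\P[\hat\mu^n\in F_m]$, so the upper bound of the unconstrained real Ginibre large deviations principle of~\cite{Benarous98} (whose rate function $\I$ is good) yields $\limsup_n\frac{1}{n^2}\log p^n_{\overline{\alpha n}}\le-\inf_{F_m}\I$ for each $m$; letting $m\to\infty$, compactness of the sub-level sets of $\I$ gives $\limsup_n\frac{1}{n^2}\log p^n_{\overline{\alpha n}}\le-\inf_{F}\I$. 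It then remains to identify $\inf_{F}\I=\I[\mu_\alpha]$: the set $F$ is convex and $\I$ is convex with unique global minimiser the circular law, which charges $\R$ with mass $0<\alpha$ and hence lies outside $F$; consequently the minimum of $\I$ over $F$ is attained on the set $\{\nu(\R)=\alpha\}=\M_\alpha$ --- sliding along the segment joining any putative minimiser with $\nu(\R)>\alpha$ to the circular law strictly decreases $\I$ while remaining in $F$ --- and there it equals $\I[\mu_\alpha]$ by uniqueness of the constrained minimiser. Combined with the lower bound, this proves~\eqref{eq:pnan}. (Alternatively one may obtain the upper bound purely from the first inequality of Theorem~\ref{thm:LDP}, which, being a bound on the joint eigenvalue density over a L\'evy ball, holds for an arbitrary centre: reduce to a compact set using the exponential tightness established in the proof of the preceding theorem, cover it by finitely many L\'evy balls, discard those centred at $\nu$ with $\nu(\R)<\alpha$ --- whose intersection with $\M_\alpha^n$ is empty for $n$ large --- and bound the remaining ones by $-\I[\nu]\le-\I[\mu_\alpha]$ via the same variational fact.)

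\emph{Main obstacle.} The delicate part is the upper bound. Because $\hat\mu^n$ only belongs to the $n$-dependent, non-closed set $\M_\alpha^n$, and because the weak topology does not detect the real mass continuously --- an empirical spectrum with exactly $\overline{\alpha n}$ real eigenvalues but $\Theta(n)$ complex eigenvalues at imaginary height $o(1)$ is weakly close to a measure with real mass strictly larger than $\alpha$ --- one cannot simply apply a closed-set large deviations bound to $\M_\alpha$. The $F_n$-enlargement repairs this, at the price of the variational identity $\inf\{\I[\nu]:\nu(\R)\ge\alpha\}=\I[\mu_\alpha]$ (morally: forcing additional mass onto the real line can only increase the rate); the remaining steps are routine bookkeeping.
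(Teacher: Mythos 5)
Your lower bound is exactly the paper's: apply the second inequality of Theorem~\ref{thm:LDP} at $\nu=\mu_\alpha$ and use monotonicity in $\delta$. Your upper bound, however, takes a genuinely different route. The paper stays entirely inside Theorem~\ref{thm:LDP}: its proof of the upper bound actually establishes $\limsup_n \frac{1}{n^2}\log\P[\hat\mu^n\in\M_\alpha^n\cap B(\nu,\delta)]\le-\inf_{\M_\alpha\cap B(\nu,\delta)}\I$ for every fixed $\delta$, so the authors simply let $\delta\to\infty$ (the L\'evy ball then exhausts the whole space) and read off $-\inf_{\M_\alpha}\I=-\I[\mu_\alpha]$. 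You instead bypass the constrained estimate: you enclose $\M_\alpha^n$ in the weakly closed, $n$-independent sets $F_m=\{\nu(\R)\ge\alpha-1/m\}$, invoke the closed-set upper bound of the \emph{unconstrained} LDP of~\cite{Benarous98}, pass $m\to\infty$ using goodness of $\I$, and then prove the variational identity $\inf\{\I[\nu]:\nu(\R)\ge\alpha\}=\I[\mu_\alpha]$ by a convexity argument (sliding toward the circular law). All of these steps are sound --- in particular your observation that only the one-sided enlargement $\{\nu(\R)\ge\alpha-1/m\}$ is closed, because $\nu\mapsto\nu(\R)$ is merely upper semicontinuous, is exactly the right point of care. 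What each approach buys: the paper's is shorter once Theorem~\ref{thm:LDP} is in hand, since the constrained upper bound already carries the restriction to $\M_\alpha$; yours needs only the published unconstrained LDP for the upper half of the argument and makes explicit the monotonicity fact $\inf_{\M_\alpha}\I=\inf_{\M_\alpha^+}\I$, which the paper only records later as a remark (deduced there from the monotonicity of $\alpha\mapsto\I[\mu_\alpha]$, itself proved by essentially your convexity argument). Your parenthetical alternative via exponential tightness and a finite cover of L\'evy balls is also valid and is the standard way to upgrade ball estimates to set estimates.
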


\begin{proof}
	 In the proof of theorem~\ref{thm:LDP} we have shown that for any $\nu \in \mathcal{M}_{\alpha}$ and $\delta>0$:
	\begin{equation}
		\limsup_{n\to \infty} \frac{1}{n^2} \log \P[\hat{\mu}^n \in \mathcal{M}_{\alpha}^n \cap B(\nu,\delta)] \leq -\inf_{\mu \in \mathcal{M}_{\alpha} \cap B(\nu,\delta)}\mathcal{I}[\mu]
	\end{equation}
	and 
	\begin{equation}
			\liminf_{n\to \infty} \frac{1}{n^2} \log \P[\hat{\mu}^n \in \mathcal{M}_{\alpha}^n \cap B(\nu,\delta)] \geq -\mathcal{I}[\nu]
	\end{equation}
	And therefore, letting $\delta \to \infty$, we obtain:
	\begin{equation}
		\limsup_{n\to \infty} \frac{1}{n^2} \log \P[\hat{\mu}^n \in \mathcal{M}_{\alpha}^n] \leq -\inf_{\mu \in \mathcal{M}_{\alpha} }\mathcal{I}[\mu]
	\end{equation}
	\begin{eqnarray}
			\liminf_{n\to \infty} \frac{1}{n^2} \log \P[\hat{\mu}^n \in \mathcal{M}_{\alpha}^n] &\geq& -\mathcal{I}[\nu] \\&\geq& -\inf_{\mu \in \mathcal{M}_{\alpha} }\mathcal{I}[\mu]
	\end{eqnarray}
	readily proving result announced in equation~\eqref{eq:pnan}.
\end{proof}

This result shows that the probabilities $p^n_{\overline{\alpha n}}$ decrease as $e^{-n^2}$. \comment{For $\alpha=1$, we know that $\mu_1$ is the semi-circular law, and $\I[\mu_1]$ can be readily computed and is equal to $\log(2)/4$ (it is exactly the difference between the constant term in the rate function for Hermitian matrices and that for non-Hermitian matrices, see e.g.~\cite{Benarous97,Benarous98}). This shows that $\log p_{\varphi(n)}^n\sim -\frac{n^2}{4}\log 2$ for any map $\varphi:\mathbbm{N}\mapsto \mathbbm{N}$ with $\varphi(n)/n\to 1$. In particular, $\log p_{n-2r}^n\sim -\frac{n^2}{4}\log 2$ for any $r\in \mathbbm{N}$}. Furthermore the continuity and convexity properties of $\alpha\mapsto\I[\mu_\alpha]$ proven below imply the continuity and convexity of the map $\alpha\mapsto \comment{\lim_{n\to\infty}} p_{\overline{\alpha n}}^n$.

\section{Properties of the minimizer $\mu_{\alpha}$}\label{sec:variational}
We investigate the qualitative properties of the distribution $\mu_{\alpha}$, the large $n$ limit spectral distribution of the Ginibre ensemble with asymptotic proportion $\alpha$ of real eigenvalues. 

\subsection{Continuity and monotonicity of $\alpha \mapsto \mu_{\alpha}$}

In this subsection, we establish continuity of the minimizer and of the minimum of the rate function on the subspaces $\M_\alpha$ with respect to the parameter $\alpha$, as well as the convexity and monotonicity of the minimum.
\begin{proposition}$ $
	\begin{enumerate}
		\item The mappings $\alpha \in [0,1] \mapsto \mu_{\alpha}$ and $\alpha \in [0,1] \mapsto \I[\mu_{\alpha}]$ are continuous.
		\item The mapping $\alpha \in[0,1] \mapsto \I[\mu_{\alpha}]$ is convex and increasing.
	\end{enumerate}	
\end{proposition}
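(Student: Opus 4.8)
The plan is to prove the second assertion first, since the convexity of $\alpha\mapsto\I[\mu_\alpha]$ underpins everything. The crucial elementary observation is that the family $\{\M_\alpha\}$ is \emph{affine} in $\alpha$: if $\mu_i\in\M_{\alpha_i}$ for $i=0,1$, then for $t\in[0,1]$ the convex combination $(1-t)\mu_0+t\mu_1$ lies in $\M_1^S(\C)$ and charges $\R$ with mass exactly $(1-t)\alpha_0+t\alpha_1$, hence belongs to $\M_{(1-t)\alpha_0+t\alpha_1}$. Applying this to the minimizers $\mu_{\alpha_0},\mu_{\alpha_1}$, and combining the minimality defining $\mu_{(1-t)\alpha_0+t\alpha_1}$ with the convexity of $\I$ on $\M_1^S(\C)$ \cite{Benarous97,Benarous98}, one gets
\[
\I[\mu_{(1-t)\alpha_0+t\alpha_1}]\ \le\ \I\!\left[(1-t)\mu_{\alpha_0}+t\mu_{\alpha_1}\right]\ \le\ (1-t)\,\I[\mu_{\alpha_0}]+t\,\I[\mu_{\alpha_1}],
\]
which is the convexity of $\alpha\mapsto\I[\mu_\alpha]$. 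For monotonicity, recall that $\mu_0$ is the circular law, the \emph{global} minimizer of $\I$ over $\M_1^S(\C)$, with $\I[\mu_0]=0$ (this is precisely what fixes $K=\frac38$, as one checks from the logarithmic potential of the uniform measure on the unit disk). Thus $\alpha\mapsto\I[\mu_\alpha]$ is convex on $[0,1]$ and attains its minimum at the left endpoint, hence is non-decreasing. To upgrade this to strictly increasing I would argue by contradiction: were it constant, equal to $c$, on some subinterval $[a,b]$ with $a<b$, then $\tfrac12(\mu_a+\mu_b)\in\M_{(a+b)/2}$, and since $\mu_a\ne\mu_b$ (they put different masses on $\R$) and $\I$ is strictly convex on finite-energy measures, $\I[\tfrac12(\mu_a+\mu_b)]<c$, contradicting the minimality of $\mu_{(a+b)/2}$.

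For the continuity statements I would fix a target $\alpha^\ast\in[0,1]$ and build an explicit competitor family $\eta_\beta\in\M_\beta$ with $\eta_\beta\to\mu_{\alpha^\ast}$ and $\I[\eta_\beta]\to\I[\mu_{\alpha^\ast}]$ as $\beta\to\alpha^\ast$. Choosing a fixed compactly supported probability measure $\sigma_R$ on $\R$ and a fixed compactly supported symmetric probability measure $\sigma_C$ on $\C\setminus\R$, set
\[
\eta_\beta=\tfrac{1-\beta}{1-\alpha^\ast}\,\mu_{\alpha^\ast}+\tfrac{\beta-\alpha^\ast}{1-\alpha^\ast}\,\sigma_R\quad(\beta\ge\alpha^\ast),\qquad \eta_\beta=\tfrac{\beta}{\alpha^\ast}\,\mu_{\alpha^\ast}+\tfrac{\alpha^\ast-\beta}{\alpha^\ast}\,\sigma_C\quad(\beta\le\alpha^\ast),
\]
with the obvious one-sided modification at $\alpha^\ast\in\{0,1\}$. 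A direct computation gives $\eta_\beta(\R)=\beta$, so $\eta_\beta\in\M_\beta$; and since $\I$ is a quadratic energy functional and the minimizer $\mu_{\alpha^\ast}$ has finite second moment and finite logarithmic energy, all second moments and cross-energies along these line segments are finite, so $\I[\eta_\beta]\to\I[\mu_{\alpha^\ast}]$ by continuity of the quadratic form. This yields the upper bound $\limsup_{\beta\to\alpha^\ast}\I[\mu_\beta]\le\I[\mu_{\alpha^\ast}]$.

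The matching lower bound and the convergence $\mu_\beta\to\mu_{\alpha^\ast}$ would then follow from compactness. By convexity, $\I[\mu_\beta]\le\max(\I[\mu_0],\I[\mu_1])=\tfrac14\log 2$, so all $\mu_\beta$ lie in a single compact level set of the good rate function $\I$. Given $\beta_n\to\alpha^\ast$, extract a subsequence with $\mu_{\beta_n}\rightharpoonup\mu_\ast$. Since $\R$ is closed, $\mu_\ast(\R)\ge\limsup_n\beta_n=\alpha^\ast$; put $\beta:=\mu_\ast(\R)$. Lower semicontinuity of $\I$ together with the competitor bound give $\I[\mu_\ast]\le\liminf_n\I[\mu_{\beta_n}]\le\I[\mu_{\alpha^\ast}]$. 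If $\beta>\alpha^\ast$, then $\mu_\ast\in\M_\beta$ forces $\I[\mu_\ast]\ge\I[\mu_\beta]>\I[\mu_{\alpha^\ast}]$ by the strict monotonicity just proved, a contradiction; hence $\beta=\alpha^\ast$, $\mu_\ast\in\M_{\alpha^\ast}$, and $\I[\mu_\ast]\le\I[\mu_{\alpha^\ast}]$ means $\mu_\ast$ minimizes $\I$ on $\M_{\alpha^\ast}$, so $\mu_\ast=\mu_{\alpha^\ast}$ by uniqueness. As every subsequential limit equals $\mu_{\alpha^\ast}$ in a compact metrizable space, $\mu_{\beta_n}\to\mu_{\alpha^\ast}$ in the Lévy topology; feeding this back into $\I[\mu_{\alpha^\ast}]=\I[\mu_\ast]\le\liminf_n\I[\mu_{\beta_n}]$ gives $\liminf_{\beta\to\alpha^\ast}\I[\mu_\beta]\ge\I[\mu_{\alpha^\ast}]$, which together with the previous paragraph proves continuity of $\alpha\mapsto\I[\mu_\alpha]$.

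The one genuinely non-formal point — and the step I expect to be the main obstacle — is excluding that the weak limit $\mu_\ast$ acquires extra real mass, i.e.\ that $\mu_\ast(\R)>\alpha^\ast$: weak convergence controls the mass of the closed set $\R$ only from above, so a priori complex mass could condense onto $\R$ in the limit. This is exactly where the strict monotonicity of $\alpha\mapsto\I[\mu_\alpha]$ is indispensable, and that in turn rests on the strict convexity of the logarithmic energy combined with the uniqueness of $\mu_\alpha$. The remaining ingredients (affine dependence of $\M_\alpha$ on $\alpha$, the explicit competitors, and the compactness of level sets) are routine once this is secured.
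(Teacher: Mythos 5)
Your proof is correct, and for the continuity part it takes a genuinely different route from the paper. The convexity and monotonicity arguments coincide with the paper's (convex combinations of minimizers plus convexity of $\I$; minimum of a convex function at the left endpoint), except that you go further and establish \emph{strict} monotonicity via strict convexity of the logarithmic energy, whereas the paper contents itself with ``not constant because $\I[\mu_0]<\I[\mu_1]$''. For continuity, the paper invokes $\Gamma$-convergence of the functionals $\mathcal{J}_{\alpha_n}$ (liminf inequality, recovery sequences $\nu_{\alpha_n}=\alpha_n\mu_R+(1-\alpha_n)\mu_C$, equicoerciveness from compact level sets), while you unpack the same content by hand: explicit competitors give upper semicontinuity of $\beta\mapsto\I[\mu_\beta]$, and compactness of level sets plus lower semicontinuity plus uniqueness give convergence of the minimizers. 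The payoff of your version is precisely the point you flag at the end: since $\R$ is closed, weak limits of $\mu_{\beta_n}$ can only \emph{gain} real mass, and ruling out $\mu_\ast(\R)>\alpha^\ast$ is the genuinely delicate step. The paper's verification of the $\Gamma$-liminf inequality asserts that if $\mu\notin\M_\alpha$ then $\mu_{\alpha_n}\notin\M_{\alpha_n}$ for large $n$, which is not justified as stated (a sequence with $\mu_{\alpha_n}(\R)=\alpha_n$ can perfectly well converge to a limit charging $\R$ with mass $>\alpha$); your use of strict monotonicity to exclude this condensation closes that gap and is the more complete argument. Two cosmetic points: you should require $\sigma_R$ and $\sigma_C$ to have finite logarithmic energy (compact support alone does not suffice), and note that the endpoint cases $\alpha^\ast\in\{0,1\}$ only need one of the two competitor formulas.
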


\begin{proof}$ $
	\begin{enumerate}
		\item 	We define:
\begin{equation}\label{eq:Jalpha}
\mathcal{J}_{\alpha}[\mu]:=
\begin{cases}
	\mathcal{I}[\mu]\mbox{ if }\mu \in \mathcal{M}_{\alpha}\\
	+\infty\mbox{ otherwise.}
\end{cases}
\end{equation}

$\I$ has a unique minimizer in $\M_\alpha$. In order to show the continuity properties in $\alpha$ we will use the theory of $\Gamma$-convergence \comment{\cite[Theorem 1.21 p.29]{braides}}. In detail we consider a sequence $(\alpha_n)_{n\geq 0}\in[0,1]^\mathbb{N}$ converging towards $\alpha\in[0,1]$. For each $\alpha_n$, there exists a minimum $\mu_{\alpha_n}$ of the map $\mathcal{J}_{\alpha_n}$. To show that $\mu_{\alpha_n}\to\mu_\alpha$, we will show that the sequence of functionals $\mathcal{J}_{\alpha_n}$ $\Gamma$-converges to $\mathcal{J}_\alpha$. This amounts to proving a few inequalities on the limits of the sequence of functions $\mathcal{J}_{\alpha_n}$ together with an equi-coerceness property of the sequence $(\mathcal{J}_{\alpha_n})_n$. In detail, it suffices to prove that the sequence of processes satisfies the following properties:
\begin{enumerate}
	\item for any $\mu \in \mathcal{M}$ and any converging sequence $\alpha_n \to \alpha$ with $\mu_{\alpha_n} \to \mu$, the following inequality holds:
			\begin{equation}
				\label{eq:liminfgamma}
				\liminf_{n\to \infty} \mathcal{J}_{\alpha_n}[\mu_{\alpha_n}] \geq  \mathcal{J}_{\alpha}[\mu]
			\end{equation}
	\item  for any $\mu \in \mathcal{M}$, \comment{there exists a sequence $\alpha_n\to \alpha$ and $\nu_{\alpha_n} \in \M_{\alpha_n}$ converging to $\mu$ such that:
			\begin{equation}
				\label{eq:limsupgamma}
				\limsup_{n\to \infty} \mathcal{J}_{\alpha_n}[\nu_{\alpha_n}] \leq  \mathcal{J}_{\alpha}[\mu]
			\end{equation}}
	\item (Equicoerciveness): for all $n\in \mathbb{N}$ and for all $t>0$ there exists a compact set \comment{$K_t\subset \M^1(\mathbb{C})$} such that \comment{$\{\nu:\mathcal{J}_{\alpha_n}[\nu]\leq t\} \subset K_t$}. 
	
\end{enumerate}

We now prove that the three assumptions indeed hold in our case. 

(a) This property is a consequence of the regularity properties of good rate functions. Indeed, let $\mu \in \mathcal{M}$ be a fixed measure. If $\mu \notin \mathcal{M}_{\alpha}$, then for $n$ large enough, $\mu_{\alpha_n}$ cannot be in $\mathcal{M}_{\alpha_n}$, the above $\liminf$ is infinite and inequality~\eqref{eq:liminfgamma} is trivial. Otherwise, if $\mu\in \mathcal{M}_{\alpha}$, the above inequality is actually a direct consequence of the lower-semicontinuity of $\mathcal{I}$ : we can assume that $\mu_{\alpha_n} \in \mathcal{M}_{\alpha_n}$ (or at least a subsequence), so that $\mathcal{J}_{\alpha_n}[\mu_{\alpha_n}] = \mathcal{I}[\mu_{\alpha_n}]$ and then one uses the lower-semicontinuity of $\mathcal{I}$ to obtain \eqref{eq:liminfgamma}.
		
(b) Given a measure $\mu \in \mathcal{M}$, the second step amounts to finding a sequence $\alpha_n\to \alpha$ \comment{and a sequence of measures $\nu_{\alpha_n} \in \M_{\alpha_n}$ with $\nu_{\alpha_n}\to \mu$} such that the inequality~\eqref{eq:limsupgamma} holds. For $\mu \notin \mathcal{M}_{\alpha}$, the inequality is trivial, hence valid for any sequence. Otherwise, $\mu$ can be written as $\mu =\alpha \mu_R + (1-\alpha)\mu_C$, and for any sequence $\alpha_n \to \alpha$, one defines $\comment{\nu_{\alpha_n}} = \alpha_n \mu_R + (1-\alpha_n)\mu_C$. Then, $\comment{\nu_{\alpha_n}} \in \mathcal{M}_{\alpha_n}$ and $\mathcal{J}_{\alpha_n}[\comment{\nu_{\alpha_n}}] = \mathcal{I}[\comment{\nu_{\alpha_n}}]$. Therefore, one obtains \eqref{eq:limsupgamma} by continuity of $a \mapsto \mathcal{I}[a \mu_R + (1-a)\mu_C]$. 
		
(c) The equicoerciveness property is a consequence of the fact that $\mathcal{I}$ is a good rate function, hence has compact level level-sets. Therefore, for any $t\in \mathbb{R}$, there exists a compact $K_t$ such that $\{\mathcal{I} \leq t\} \subset K_t$. We thus have $\{\mathcal{J}_{\alpha_n}\leq t\} = \{\mathcal{I}\leq t\} \cap \mathcal{M}_{\alpha_n} \subset K_t$.
	
These three results proved the $\Gamma$-convergence of $\mathcal{J}_{\alpha_n}$ towards $\mathcal{J}_\alpha$, and therefore of $\mu_{\alpha_n}$ to $\mu_{\alpha}$ for any sequence $\alpha_n\to \alpha$, which concludes the proof of the first point of the proposition.\\

\item We now prove the convexity and monotony of $\alpha \mapsto \I[\mu_{\alpha}]$. Let $\alpha,\alpha' \in [0,1]$. Since $\mu \mapsto \I[\mu]$ is convex, we have:
\begin{equation}
	\I[t \mu + (1-t) \nu] \leq t \I[\mu] + (1-t) \I[\nu] \mbox{ for }t\in [0,1],\  \mu \in \M_{\alpha} \mbox{ and }\nu \in \M_{\alpha'}
\end{equation}
Therefore, taking the infimum on the above equation shows that $\alpha \mapsto \I[\mu_{\alpha}]$ is convex. Since, $\I[\mu_{0}]<\I[\mu_{1}]$ and $\I[\mu_{\alpha}] \geq \I[\mu_{0}]$, we conclude that the continuous mapping $\alpha \mapsto \I[\mu_{\alpha}]$ is necessarily increasing.
\end{enumerate}
	
\end{proof}

\begin{remark}
	Since $\alpha \mapsto \I[\mu_{\alpha}]$ is increasing, the minimum of the rate function conditioned on $\M_{\alpha}$ is equal to the minimum on $\M^+_{\alpha} =  \{\mu \in \M_1^S(\C): \mu(\R) \geq \alpha \}$:
	\begin{equation}
		\inf_{\M_\alpha} \I[\mu] = \inf_{\M^+_{\alpha}} \I[\mu]
	\end{equation}
	
\end{remark}

With this continuity properties in hand, we return to the characterization of the qualitative features of $\mu_\alpha$. 

\subsection{Qualitative description of $\mu_{\alpha}$}
We are now interested in characterizing the distribution $\mu_{\alpha}$. The large deviation principle defines this distribution as the minimum of $\mathcal{I}$, which can therefore be characterized through the computation of the differential of $\I$ in the space $\M_{\alpha}$. This differential is a linear operator, acting on the space of signed measures $h$ symmetrical with respect to the real axis and such that $h(\C)=0$, with traces $h^R$ on the real axis and $h^C$ on $\H$ satisfying $h^R(\R)=h^C(\H)=0$. Denoting $\alpha \mu^R_{\alpha}$ the trace of $\mu_{\alpha}$ on the real axis and $\frac{1-\alpha}{2} \mu^C_{\alpha}$ on $\H$ (with these definitions, \comment{$\mu^R_{\alpha}(\R)=\mu^C_{\alpha}(\H)=1$}), we find that:
\begin{align*}
	d_{\mu}\I[h] = \frac 1 2 \int_{\C} \vert z \vert^2 dh(z)-\int \log\vert z-z'\vert d\mu(z)dh(z')
\end{align*}
which can be rewritten as:
\begin{align*}
 \partial_{\mu^R} \mathcal{I}[\mu]=&\int\left\{\alpha \vert x\vert^2+\alpha^2\int\log\vert x-x' \vert^2 d\mu^R(x') +\alpha(1-\alpha)\int\log|x-z|^2d\mu^C(z)\right\}h^R(dx)\\
 \partial_{\mu^C} \mathcal{I}[\mu]=&\int\left\{ (1-\alpha) \vert z \vert^2 +\alpha(1-\alpha)\int\log|z-x|^2d\mu^R(x)  \right.\\
		&\left.+(1-\alpha)^2\int\left( \log|z-z'|^2+ \log|z-z'^*|^2 \right)d\mu^C(z')\right\}h^C(dz)\ .
\end{align*}
Necessarily, provided that these differentials are bounded at $\mu_{\alpha}$, the linear behavior described by the above equation fails (indeed, the freedom to replace $h$ by $-h$ would contradict the minimality of $\I$ on $\M_{\alpha}$ at $\mu_{\alpha}$). Therefore the differential operator is equal to zero for any acceptable measure $h$, implying the following proposition:

\begin{proposition}
The measure $\mu_{\alpha}$ satisfies the system of integral equations defined for any $x\in\R$ and $z\in \H$ as
\begin{align*}
&\alpha x^2+\alpha^2\int\log\vert x-x' \vert^2 d\mu^R_{\alpha}(x') +\alpha(1-\alpha)\int\log|x-z'|^2d\mu^C_{\alpha}(z')=0\\
&(1-\alpha) \vert z \vert^2 +(1-\alpha)^2\int\left( \log|z-z'|^2+ \log|z-z'^*|^2 \right)d\mu^C_{\alpha}(z')+\alpha(1-\alpha)\int\log|z-x'|^2d\mu^R_{\alpha}(x') =0.
\end{align*}
For the two extreme cases $\alpha\in\{0,1\}$ we recover the circular law $\mu^R(x)=0$, $\mu^C(z)=\frac{1}{\pi}\1_{|z|<1}$ for $\alpha=0$ and the semi-circular law $\mu^R(x)=\mu_{sc}(x)=\frac{1}{\pi}\sqrt{2-x^2}\1_{|x|\leq\sqrt{2}}$, $\mu^C(z)=0$  for $\alpha=1$.
\end{proposition}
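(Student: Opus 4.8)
The plan is to recognise the proposition as the Euler--Lagrange (first--variation) characterisation of the constrained minimiser and to derive it by differentiating $\I$ along admissible perturbations, using convexity to upgrade the variational inequality into an equality. Recall from the preceding theorem that $\M_\alpha$ is convex and $\I$ is strictly convex on it with unique minimiser $\mu_\alpha$. I would take any signed measure $h$ that is symmetric under complex conjugation, with traces $h^R$ on $\R$ and $h^C$ on $\H$ satisfying $h^R(\R)=h^C(\H)=0$, and such that $\mu_\alpha+\varepsilon h\in\M_\alpha$ for small $\varepsilon$; then $\varepsilon\mapsto\I[\mu_\alpha+\varepsilon h]$ is convex with its minimum at $0$, so both one--sided derivatives at $0$ are $\geq0$, and applying this to $h$ and to $-h$ forces $d_{\mu_\alpha}\I[h]=0$. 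Inserting the splitting $d_{\mu_\alpha}\I[h]=\partial_{\mu^R}\I[\mu_\alpha][h^R]+\partial_{\mu^C}\I[\mu_\alpha][h^C]$ already computed above, and choosing test perturbations supported only on $\R$ (so $h^C=0$), then only on $\H\cup\overline{\H}$ (so $h^R=0$), shows that the two linear functionals $h^R\mapsto\int F^R\,dh^R$ and $h^C\mapsto\int F^C\,dh^C$ annihilate every zero--mass signed measure, where $F^R$ and $F^C$ are exactly the bracketed expressions in the statement. Since a function annihilated by all zero--mass signed measures on a set is constant on that set, $F^R$ is constant on $\operatorname{supp}\mu^R_\alpha$ and $F^C$ on $\operatorname{supp}\mu^C_\alpha$; pairing each equation with the corresponding probability measure $\mu^R_\alpha$, resp.\ $\mu^C_\alpha$, together with the dilation identity $\int_\C|z|^2\,d\mu_\alpha=\tfrac12$ (obtained by minimising $\I$ over rescalings of $\mu_\alpha$), determines the constants and gives the system in the stated form; off the supports the same brackets dominate the constant, which is the complete Euler--Lagrange statement.

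The point that requires care is that all of this is legitimate only if $\partial_{\mu^R}\I[\mu_\alpha]$ and $\partial_{\mu^C}\I[\mu_\alpha]$ are genuinely finite linear functionals --- otherwise the ``replace $h$ by $-h$'' step is meaningless. Concretely, one must show that $\mu^R_\alpha$ and $\mu^C_\alpha$ have compact supports and densities bounded with respect to Lebesgue measure on $\R$, resp.\ on $\H$, so that the logarithmic potentials $\int\log|x-x'|^2\,d\mu^R_\alpha$, $\int\log|x-z'|^2\,d\mu^C_\alpha$ and their $\H$--analogues are finite at every point. Compact support is forced by the confinement term $|z|^2$ together with the exponential tightness established earlier; boundedness and non--atomicity follow either from the continuity properties of the rate function from \cite{Benarous97,Benarous98} (used via the $\Gamma$--convergence of the previous subsection) or, more robustly, by bootstrapping regularity directly from the integral system, as in the classical theory of weighted equilibrium measures \cite{Anderson09}. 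I expect this regularity input --- and, relatedly, checking that the class of admissible $h$ is rich enough to force ``constant on the support'' --- to be the genuine obstacle; the rest is routine variational calculus.

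Finally I would read off the two extreme cases by specialisation. When $\alpha=1$ the measure $\mu^C_\alpha$ carries no mass, the second equation is empty, and the first reduces to the equilibrium condition of the one--dimensional log--gas in the quadratic potential $x^2$, whose unique solution is the semicircle of variance $\tfrac12$, $\mu_{sc}(x)=\tfrac1\pi\sqrt{2-x^2}\,\1_{|x|\leq\sqrt2}$; this can be verified from the explicit potential $\int\log|x-x'|\,d\mu_{sc}(x')=\tfrac{x^2}{2}-\tfrac12-\tfrac12\log2$ on $[-\sqrt2,\sqrt2]$, and it is precisely the Gaussian Orthogonal Ensemble statement used in \cite{Benarous97}. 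When $\alpha=0$ the measure $\mu^R_\alpha$ carries no mass, the first equation is empty, and the second --- after merging the $\H$ and $\overline{\H}$ contributions by conjugation symmetry --- reduces to the equilibrium condition of the two--dimensional Coulomb gas in the potential $|z|^2$, whose unique solution is the circular law $\tfrac1\pi\1_{|z|\leq1}$; this can be verified from $\int_\C\log|z-z'|\,\tfrac1\pi\1_{|z'|\leq1}\,dz'=\tfrac{|z|^2-1}{2}$ for $|z|\leq1$, and it is the Ginibre statement of \cite{Benarous98}. In both cases strict convexity of $\I$ on the relevant subspace guarantees that these are the minimisers, completing the identification.
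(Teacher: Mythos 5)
Your proposal follows essentially the same route as the paper: compute the first variation of $\mathcal{I}$ on the constraint set, use the $h\mapsto-h$ symmetry of admissible mass-preserving perturbations to force the differential to vanish, split into real and complex test directions, and check the extreme cases against the known semicircular and circular logarithmic potentials. You are in fact more careful than the paper on the two points it glosses over --- the finiteness of the differentials at $\mu_\alpha$ (which the paper only assumes parenthetically and defers to the subsequent regularity theorem) and the fact that the first variation a priori yields constancy of the bracketed expressions on the supports, with Lagrange-multiplier constants that must then be pinned down by normalization --- so your version supplies justification the paper's own argument omits.
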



These equations are generally too complex to be solved explicitly.  It is however possible to characterize further the minimizer $\mu_{\alpha}$ for intermediate values of $\alpha$.

 \begin{thm}\label{thm:mualpha}
 Denoting $\alpha \mu^R_{\alpha}$ and $\frac{1-\alpha}{2} \mu^C_{\alpha}$ the trace on $\R$ and $\H$ of the minimizer $\mu_{\alpha}$ of $\I$ on $\M_{\alpha}$ for $\alpha\in [0,1]$. The measure $\mu_{\alpha}$ has the following profile:
\renewcommand{\theenumi}{\roman{enumi}}
	\begin{enumerate}
		\item  $\mu^R_{\alpha}$ has a density $g_\alpha$ with respect to Lebesgue's measure on $\R$ such that $g_\alpha\in L^2(\R)$. This density vanishes outside an interval $[-R_\alpha,R_\alpha]$ for some constant $R_\alpha>0$
		\item \comment{$\mu^C_{\alpha}$ has a constant density $\frac{2}{(1-\alpha)\pi}$ with respect to Lebesgue's measure on its support $V_{\alpha}\subset \H$. This support is a simply connected bounded open set and is such that $\bar V_\alpha \cap \R = \emptyset$.}
	\end{enumerate}
\end{thm}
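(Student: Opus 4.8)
The plan is to read off the structure of $\mu_\alpha$ from its Euler--Lagrange (obstacle-problem) characterization together with the strict convexity of $\I$ on $\M_\alpha$ already established. Write $\Phi(z):=\tfrac12|z|^2-\int_\C\log|z-w|\,d\mu_\alpha(w)$ for the effective potential. Since $\I[\mu_\alpha]<\infty$ the measure $\mu_\alpha$ has finite second moment, and the bound $\log|z-w|\le\log(1+|z|)+\log(1+|w|)$ gives $\Phi(z)\ge\tfrac12|z|^2-\log(1+|z|)-C\to\infty$, so $\mathrm{supp}\,\mu_\alpha$ is compact; in particular $V_\alpha$ is bounded. Constrained minimality yields two Lagrange multipliers $c_R,c_C$ with $\Phi\equiv c_R$ on $\mathrm{supp}\,\mu^R_\alpha$ and $\Phi\ge c_R$ on $\R$, and $\Phi\equiv c_C$ on the support of $\mu_\alpha$ restricted to $\C\setminus\R$ and $\Phi\ge c_C$ on $\C\setminus\R$; I use these throughout.

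For part (ii): on the interior $V_\alpha$ of the complex support one has $\Phi\equiv c_C$, hence $0=\Delta\Phi=2-2\pi\rho_{\mu_\alpha}$ there --- the logarithmic potentials of $\mu^R_\alpha$ and of the lower half-plane part of $\mu_\alpha$ are harmonic on $\H$ and contribute nothing to $\Delta\Phi$ at interior points of $\H$. So $\mu_\alpha$ has density $1/\pi$ on $V_\alpha$, and since $\mu_\alpha$ restricted to $\H$ equals $\tfrac{1-\alpha}{2}\mu^C_\alpha$, the probability measure $\mu^C_\alpha$ has constant density $\tfrac{2}{(1-\alpha)\pi}$ on $V_\alpha$ (so $|V_\alpha|=\tfrac{(1-\alpha)\pi}{2}$). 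That there is no singular part and that $V_\alpha$ is a genuine open set follows by viewing the complex part of $\mu_\alpha$, with $\mu^R_\alpha$ frozen, as the equilibrium measure of the planar Coulomb gas with external field $\tilde V(z)=\tfrac12|z|^2-\alpha\int_\R\log|z-x|\,d\mu^R_\alpha(x)$, which is real-analytic on $\C\setminus\R$ with $\Delta\tilde V=2$: standard droplet regularity gives that the support equals $\overline{V_\alpha}$ together with its reflection across $\R$, with $V_\alpha$ open. Absence of holes is then a maximum-principle argument: granting the separation established below, any bounded component $\Omega$ of $\H\setminus\overline{V_\alpha}$ has $\partial\Omega\subset\overline{V_\alpha}$, so $\Delta(\Phi-c_C)=2>0$ in $\Omega$ and $\Phi-c_C=0$ on $\partial\Omega$, forcing $\Phi\le c_C$ in $\Omega$; combined with $\Phi\ge c_C$ on $\C\setminus\R$ this gives $\Phi\equiv c_C$ in $\Omega$, contradicting $\Delta\Phi=2$. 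Connectedness, hence simple connectedness, I would obtain from the continuity of $\alpha\mapsto\mu_\alpha$ proved above, starting from the connected case $\alpha=0$.

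The main obstacle is $\overline{V_\alpha}\cap\R=\emptyset$, which I would prove by comparing $c_R$ and $c_C$. First $c_R\ge c_C$: for $x_1\in\mathrm{supp}\,\mu^R_\alpha$ (nonempty since $\alpha>0$) and $\H\ni z_n\to x_1$, continuity of $\Phi$ gives $c_R=\Phi(x_1)=\lim_n\Phi(z_n)\ge c_C$. Next the inequality is strict: if $c_R=c_C$ then $\Phi\ge c_R$ on all of $\C$ with equality on $\mathrm{supp}\,\mu_\alpha$, which are precisely the first-order conditions for the unconstrained minimization of $\I$ on $\M_1^S(\C)$; strict convexity then forces $\mu_\alpha$ to equal the unconstrained minimizer, i.e.\ the circular law, which puts mass $0\ne\alpha$ on $\R$ --- a contradiction. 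Finally, if some $x_0\in\overline{V_\alpha}\cap\R$ existed, picking $V_\alpha\ni z_n\to x_0$ and using $\Phi(z_n)=c_C$, continuity of $\Phi$, and $x_0\in\R\Rightarrow\Phi(x_0)\ge c_R$, we would get $c_C=\Phi(x_0)\ge c_R>c_C$, impossible; hence $\overline{V_\alpha}\cap\R=\emptyset$. The continuity of $\Phi$ used here requires $\mu_\alpha$ restricted to $\H$ to have bounded density (shown above) and $\mu^R_\alpha$ to be absolutely continuous with bounded density; the latter already follows from the one-dimensional problem below with the merely $C^1$ external potential available before separation, so the logical order is: density of $\mu^C_\alpha$, absolute continuity of $\mu^R_\alpha$, separation, then the fine regularity of $g_\alpha$.

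For part (i): $\mu^R_\alpha$ has no atom since $\I[\mu_\alpha]<\infty$. Knowing $\overline{V_\alpha}\cap\R=\emptyset$, the map $x\mapsto\int_\H\log|x-z|\,d\mu^C_\alpha(z)$ is real-analytic near $\R$, so $\tfrac1\alpha\mu^R_\alpha$ is the equilibrium measure of the one-dimensional log-gas with the real-analytic, super-quadratic external potential $V^R(x)=\tfrac\alpha2 x^2-\alpha(1-\alpha)\int_\H\log|x-z|\,d\mu^C_\alpha(z)$. The classical description of such equilibrium measures gives a density $g_\alpha$ that is real-analytic in the interior of its support, vanishes like a square root at the soft endpoints --- hence is bounded, so $g_\alpha\in L^2(\R)$ --- and is supported on a finite union of intervals; by the conjugation symmetry of the whole problem this support is symmetric about $0$, so with $R_\alpha$ its largest point, $g_\alpha$ vanishes outside $[-R_\alpha,R_\alpha]$. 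Compactness of the support was recorded in the first step. The only genuinely non-routine ingredient in this scheme is the separation argument of the third paragraph and its reliance on the uniqueness of the unconstrained minimizer.
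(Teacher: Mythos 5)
Your proposal is essentially a complete, self-contained version of an argument the paper does not actually write out: after stating the theorem, the paper only remarks that the analysis of Armstrong--Serfaty--Zeitouni \cite{armstrong2013remarks} for the constrained complex Ginibre ensemble ``goes through with adequate modifications.'' Your route --- Frostman/obstacle-problem conditions with two Lagrange multipliers $c_R,c_C$, the computation $\Delta\Phi=2-2\pi\mu_\alpha$ forcing density $1/\pi$ on the open set where $\Phi$ is constant, the comparison $c_R>c_C$ via uniqueness of the unconstrained minimizer to get $\bar V_\alpha\cap\R=\emptyset$, and then one-dimensional equilibrium-measure theory with a real-analytic external field for $g_\alpha$ --- is exactly the potential-theoretic machinery that reference uses, so you are supplying the ``adequate modifications'' rather than taking a genuinely different path. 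The separation argument (ruling out $c_R=c_C$ by observing that equality would make $\mu_\alpha$ satisfy the unconstrained Euler--Lagrange conditions and hence equal the circular law, which charges no mass to $\R$) is the one non-routine step, and it is correct for $\alpha\in(0,1)$; note that part (ii) is vacuous or false at the endpoints $\alpha\in\{0,1\}$, so restricting to the open interval, as you implicitly do, is the right reading of the statement.

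Two points deserve flagging. First, your proof of connectedness of $V_\alpha$ by ``continuity of $\alpha\mapsto\mu_\alpha$ starting from $\alpha=0$'' is not a proof: continuity holds only in the weak (L\'evy) topology, under which the topology of the support is not stable --- a connected support can split into several components under arbitrarily small weak perturbations. Your maximum-principle argument genuinely excludes holes, but the number of connected components has to be controlled by other means (in \cite{armstrong2013remarks} and in the present paper this is essentially left at the level of a claim supported by numerics; see their Section 5 discussion). Second, the appeal to ``standard droplet regularity'' for the external field $\tilde V(z)=\tfrac12|z|^2-\alpha\int\log|z-x|\,d\mu^R_\alpha(x)$ is doing real work, since $\tilde V$ is only known a priori to be harmonic off $\R$ and its regularity near $\R$ depends on properties of $\mu^R_\alpha$ you establish later; your proposed logical ordering (complex density bound, then absolute continuity of $\mu^R_\alpha$, then separation, then fine regularity) is the right way to break the circularity, but it should be carried out explicitly rather than asserted. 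Neither issue invalidates the scheme; they mark the places where a referee of the full write-up would press.
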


In \cite{armstrong2013remarks} the authors have analyzed the limiting distribution of the eigenvalues of the complex Ginibre ensemble conditioned on the event that a large proportion of the eigenvalues lies in an open subset of the complex plane. The situation we consider is similar in the sense that we have extra weight on the real line. While this is not treated explicitly in \cite{armstrong2013remarks}, the argument of proof goes through with adequate modifications.

This theorem provides us with a qualitative description of the minimizer $\mu_{\alpha}$. A more precise characterization of the minimizer requires the use of other techniques, namely numerical ones that are described below.

\section{Log-gas for Ginibre conditioned on $k$}\label{sec:particlesystem}

The present section introduces and analyses a log-gas whose equilibrium distribution is related to $\mu_\alpha$. 

The log-gas approach to spectral analysis of the real Ginibre ensemble presents two new challenges with respect to previous analysis in the literature. First, the spectra of real matrices are symmetric with respect to the real axis so one has to impose this symmetry in the gas. Second, due to the positive probability of having real eigenvalues, the log-gas is composed of two interacting phases: a 2d phase supported on the complex plane and a 1d phase constrained to the real line for all timers. This last property has a very important consequence, namely that the dynamics of the whole gas depend on the quantity of real particles $k$ with respect to the total number of particles $n$. 
Our method is based on the introduction of a mixed 1d-2d log-gas whose 
stationary distribution is that of the spectrum of real Ginibre matrices with $k$ real eigenvalues.

This system is composed of $n$ interacting particles in a two-phase gas: $k$ of these particles are confined to the real axis and the other $2l=(n-k)$ particles form $l$ pairs of complex conjugated particles outside the real axis. We denote the set of eigenvalues $\{\lambda_i\}_{1\leq i \leq n}\in\C$ with $\lambda_i=(x_i,0)$ for $1\leq i \leq k$, $\lambda_i=(x_i,y_i)$ for $k< i \leq k+l$ with $y_i>0$ and $\lambda_i=\lambda_{i-l}^*$ for $k+l<i\leq n$. 

\subsection{Log-gas equations}
To derive the equations for the gas dynamics for fixed $k$, we identify the joint probability densities \eqref{eq:pdf} as a Boltzmann factor and compute the corresponding potential, following the method of \cite{Dyson62}. One obtains the following dynamics

\begin{equation}\label{eq:2phasegas}
 d\lambda_i(t) = \frac{\sigma}{\sqrt{n}}dB_i - \sum_i\nabla_i \Phi(\lambda_1,\cdots, \lambda_n)dt
\end{equation}
where $\sigma=\sqrt{2}$, $B_i$ is a real Brownian motion with $B_i=(B^R_i,0)$ for $i\leq k$, and $B_i=(B_i^R,B^I_i)$ a complex Brownian motion for $k<i\leq k+l$. Symmetry is ensured by imposing $B_i=B_{i-l}^*$ for $k+l<i\leq n$. We denote $\nabla_i=(\partial_{x_i},\partial_{y_i})$. The potential is given by
\begin{equation}\label{eq:potential}
  \Phi(\lambda_1,\cdots, \lambda_n) =\sum_i\sum_{j\neq i}\frac1{2n} V(\lambda_i,\lambda_j) + \sum_i U(\lambda_i)
\end{equation}
with
\begin{align}
         \nonumber V(\lambda_i,\lambda_j)    &= -\log|\lambda_i-\lambda_j|\\   
                 \label{eq:U} U(\lambda_i)=& \frac{x_i^2}{2}-\1_{y_i\neq 0}\left( \frac{y_i^2}{2} + \frac{1}{2n}\log\left(\textrm{erfc}(|y_i|\sqrt{2n})\right) \right)
\end{align}
We will be specially interested on the $t\to\infty$ limit of the empirical measure
\begin{equation}
 \rho_k^n(t)=\frac{1}{n}\sum_{i=1}^n\delta(\lambda_i(t))\ .
\end{equation}

The system \eqref{eq:2phasegas} is well posed: we prove below that solutions neither collide nor blow up. The result is actually more general and extends to particle systems within more general confining potential $U$ we state:
\begin{thm} Suppose $\lambda_i(0)$ for $i=1,\dots,k+l$ are distinct and  $\lambda_i(0)=\lambda_{i-l}^*(0)$ for $k+l<i\leq n$, that $U$ is $C^2$ and the maps $(x,y)\mapsto (-x \partial_x U(x+\mathbf{i}y),-y \partial_yU(x+\mathbf{i}y))$ \comment{and $(x,y)\mapsto \Delta U(x,y)$} are upper bounded in $\R^2$ by some constant $\gamma$. The processes $\lambda_i$ are defined by \eqref{eq:2phasegas} up to the stopping time \[T=\inf\{t:\lambda_i(t)=\lambda_j(t)\mbox{ for some }j\neq i\mbox{ or }y_i=0\mbox{ for some }i>k\mbox{ or }\lambda_i=\infty\mbox{ for some }i \}\ .\] 

If $\sigma^2\leq 2$, then $\P[T=\infty]=1$: there is no collision or explosion and the particle system is defined for all times.

\end{thm}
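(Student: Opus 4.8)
The plan is to prove non-collision and non-explosion together by the classical Lyapunov/localisation scheme for singular interacting diffusions (as in \cite{Rogers93,Cepa97}), adapted to the mixed one/two-dimensional geometry. I work with the free particles $\Lambda(t)=(\lambda_1(t),\dots,\lambda_{k+l}(t))$ (the others being fixed by conjugation), living on the open set $\Omega$ of configurations with all coordinates distinct and with $\Im\lambda_i\neq0$ for $i>k$; on $\Omega$ the potential $U$ is $C^2$ and the drift of \eqref{eq:2phasegas} is locally Lipschitz, so a unique strong solution exists up to the exit time $T$ of $\Omega$. I would then exhibit a function $F\colon\Omega\to[1,\infty)$ that blows up on $\partial\Omega$ and at infinity and whose drift along the diffusion is bounded above; a natural choice is
\[ F(\Lambda)=\Phi(\lambda_1,\dots,\lambda_n)+\kappa\sum_{i=1}^n|\lambda_i|^2+c_0, \]
with $\Phi$ the potential \eqref{eq:potential} and $\kappa,c_0>0$ fixed. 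The interaction part of $\Phi$ forces $F\to\infty$ at a collision and when a complex particle reaches $\R$ (its conjugate then collides with it), while $\kappa\sum|\lambda_i|^2$ dominates the logarithmic growth of $\Phi$ and forces $F\to\infty$ at infinity; here one uses that $-x\,\partial_x U,-y\,\partial_y U\le\gamma$ integrates to $U(z)\ge-\gamma\log|z|-C$, so $\sum U(\lambda_i)$ is bounded below by $\kappa\sum|\lambda_i|^2$ for $\kappa$ large, which also makes $F$ bounded below. Setting $T_m=\inf\{t:F(\Lambda(t))\ge m\}$ one has $T=\sup_m T_m$, so it suffices to prove $\P[T_m\le t]\to0$ for each fixed $t$; by Markov's inequality this reduces to the uniform bound $\E[F(\Lambda(t\wedge T_m))]\le F(\Lambda(0))+Ct$, i.e. to showing the drift of $F$ is bounded above.

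The core of the argument is the estimate $\mathcal L F\le C$ on $\Omega$, where $\mathcal L=\frac{\sigma^2}{2n}\Delta-\nabla\Phi\cdot\nabla$ is the generator of \eqref{eq:2phasegas}. For the quadratic piece: $\Delta\sum|\lambda_i|^2$ is constant, and $-\nabla\Phi\cdot\nabla\sum|\lambda_i|^2$ splits into a pair-interaction term that becomes a constant after antisymmetrising (a discrete virial identity) and a confinement term $-2\sum_i\big(x_i\,\partial_x U(\lambda_i)+y_i\,\partial_y U(\lambda_i)\big)$ that is bounded above since $-x\,\partial_x U,-y\,\partial_y U\le\gamma$. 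For $\mathcal L\Phi=\frac{\sigma^2}{2n}\Delta\Phi-|\nabla\Phi|^2$ I would compute $\Delta\Phi$ pair by pair: the confinement contribution is controlled by $\Delta U\le\gamma$; the interaction contribution \emph{vanishes} on every complex--complex pair and on every real--complex pair seen from the complex particle, because $\log|\cdot|$ is harmonic on $\R^2\setminus\{0\}$; and it is positive and of inverse-square type only on the real--real pairs (giving $\tfrac1n\sum_{i\neq j\le k}|x_i-x_j|^{-2}$), the conjugate self-interactions (giving $\tfrac1n\sum_{k<i}|y_i|^{-2}$), and the real--complex pairs seen from the real particle. On the other side $-|\nabla\Phi|^2\le0$, and the classical identity $\sum_m\big(\sum_{j\neq m}(x_m-x_j)^{-1}\big)^2=\sum_{m\neq j}(x_m-x_j)^{-2}$, together with its trivial analogue for the conjugate forces $-\tfrac1{ny_i}$, matches the one-dimensional singular part of $|\nabla\Phi|^2$ exactly against that of $\tfrac1n\Delta\Phi$; the net singular contribution to $\mathcal L\Phi$ is then
\[ \frac{\sigma^2-2}{2n^2}\Big(\sum_{i\neq j\le k}\frac1{(x_i-x_j)^2}+\sum_{k<i}\frac1{y_i^2}\Big)\ \le\ 0\qquad\text{whenever}\ \ \sigma^2\le2, \]
which is exactly where the threshold, and hence the admissibility of $\sigma=\sqrt2$ in \eqref{eq:2phasegas}, enters.

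The hard part will be the bookkeeping behind that display: since there is no slack when $\sigma^2=2$, one must control the cross-terms of $|\nabla\Phi|^2$ between the singular one-dimensional forces (real--real and conjugate-pair) and the remaining forces (real--complex interactions and $\nabla U$) \emph{without losing a multiplicative constant}, and one must absorb the inverse-square term of $\Delta\Phi$ coming from real--complex pairs seen from the real particle by the matching real--complex force-squared term in $-|\nabla\Phi|^2$, exactly as in the pure complex-Ginibre case; the delicate configurations are those with several simultaneous near-collisions of different types (e.g. a complex particle approaching $\R$ while two real particles approach each other). The tools for this are antisymmetrisation of the cross-terms, which collapses the purely one-dimensional triple sums through the vanishing identity $\sum\frac1{(x_m-x_i)(x_j-x_i)}=0$ (sum over distinct triples) and turns the drift--confinement cross-terms into bounded quantities, together with the harmonicity of $\log|\cdot|$ on $\R^2$ and the two confinement bounds. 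Once $\mathcal L F\le C$ is in place, Itô's formula gives $F(\Lambda(t\wedge T_m))=F(\Lambda(0))+M_{t\wedge T_m}+\int_0^{t\wedge T_m}\mathcal L F\,ds$ with $M$ a local martingale whose integrand $\tfrac\sigma{\sqrt n}\nabla F$ is bounded on $\{F<m\}$, hence a true martingale up to $T_m$; taking expectations yields $\E[F(\Lambda(t\wedge T_m))]\le F(\Lambda(0))+Ct$, so $\P[T_m\le t]\le(F(\Lambda(0))+Ct)/m\to0$, whence $T_m\uparrow\infty$ and $\P[T=\infty]=1$. For the specific potential \eqref{eq:U} the hypotheses are checked directly: $-x\,\partial_x U=-x^2\le0$; $-y\,\partial_y U\le0$ via the elementary bound $\mathrm{erfc}(s)<e^{-s^2}/(s\sqrt\pi)$; and $\Delta U=1+\partial_y^2 U$ is bounded above on $\H$ since $\log\mathrm{erfc}$ is concave with finite curvature limits. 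Here $U$ is smooth on $\R$ and on $\H$ separately, which is all that is used since complex particles never reach $\R$ before $T$.
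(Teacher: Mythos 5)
Your proposal is correct and follows the same Rogers--Shi scheme as the paper: It\^o's formula applied to the interaction potential, the exact cancellation $\frac{\sigma^2}{2n^2}\Delta_j V-\frac{1}{n^2}(\nabla_j V)^2\le \frac{\sigma^2-2}{2n^2}|\lambda_j-\lambda_r|^{-2}\le 0$ for the two codimension-one singularities (real--real pairs and conjugate pairs), and a quadratic/virial bound for non-explosion. Two differences in packaging are worth noting. First, you fold collision and explosion into a single Lyapunov function $F=\Phi+\kappa\sum_i|\lambda_i|^2$, whereas the paper runs two separate arguments: It\^o on $\Phi$ alone for collisions, and a pathwise comparison of $R_t=\frac{1}{2n}\sum_j(x_j^2+y_j^2)$ with a squared Bessel process for explosion. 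Your version is tidier and gives the two conclusions in one stroke, at the cost of having to check that $F\to\infty$ at infinity despite the $-\log$ terms in $\Phi$, which you correctly do via $U(z)\ge -C\log|z|-C$. Second, and more substantively, you propose to treat real--complex and non-conjugate complex--complex near-collisions analytically, absorbing the inverse-square part of $\partial_{x_j}^2\log|x_j-z_r|$ into the matching contribution of $-|\nabla\Phi|^2$; the paper instead dismisses these collisions as ``impossible because of the noise'' (they are codimension-two events for the Brownian-driven complex particles) and runs the $\sigma^2\le 2$ estimate only for real--real and conjugate-pair interactions. Your route is the more self-contained one, but it is also where the bookkeeping you flag as hard actually lives: with no slack at $\sigma^2=2$ the cross-terms of $|\nabla\Phi|^2$ mixing the one-dimensional singular forces with the real--complex and confinement forces cannot be handled by Cauchy--Schwarz and must be antisymmetrised, and in the mixed gas the identity $\nabla_j V(\lambda_j,\lambda_r)=-\nabla_r V(\lambda_r,\lambda_j)$ fails when $j$ is real and $r$ complex (the real particle feels only the $x$-component of the force), so the triple-sum cancellation needs adaptation there. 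Since the paper's own proof defers exactly this bookkeeping to \cite{Rogers93}, this is a residual technical debt shared with the published argument rather than a gap specific to your approach.
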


\begin{proof}
The proof follows the one in \cite{Rogers93}.
To prove that there are no collisions we show that the drift term in \eqref{eq:2phasegas} is bounded \comment{from above}. We apply It\^o's formula on $\Phi$ to obtain
\begin{equation}
 d\Phi=\frac{\sigma}{\sqrt{n}}\sum_j \nabla_j\Phi dB_j + \sum_j \left[\frac{\sigma^2}{2n}\Delta_j\Phi  -   (\nabla_j\Phi)^2\right]dt
\end{equation}
where $\Delta_j=\partial_{x_j}^2+\partial_{y_j}^2$. Developing the drift term we have
\begin{align*}
 &\sum_j \left[\frac{\sigma^2}{2n}\Delta_j\Phi  -   (\nabla_j\Phi)^2\right]\\
 =&\sum_j \left[\frac{\sigma^2}{2n} \sum_{r\neq j} \frac1n\Delta_j V(\lambda_j,\lambda_r) + \frac{\sigma^2}{2n}\Delta U(\lambda_j)\right.\\
 &\left.- \sum_{r\neq j}\frac{1}{n^2}(\nabla_j V(\lambda_j,\lambda_r))^2 - (\nabla U(\lambda_j))^2 - \sum_{r\neq j} \frac1n\nabla_j V(\lambda_j,\lambda_r)(\nabla U(\lambda_j)-\nabla U(\lambda_r))\right]
\end{align*}
\comment{The term $- (\nabla U(\lambda_j))^2$ is clearly upperbounded. Moreover, our assumptions ensure that both terms $\frac{\sigma^2}{2n}\Delta U(\lambda_j)$ and the last term are upperbounded.} Therefore, the only term that we have to control is
\begin{equation*}
 \sum_j\sum_{r\neq j}\frac{\sigma^2}{2n^2}\Delta_j V(\lambda_j,\lambda_r)-\frac{1}{n^2}(\nabla_j V(\lambda_j,\lambda_r))^2\ .
\end{equation*}
Because of the noise in the system, the collisions between real and complex particles as well as between complex non conjugated particles are impossible. On the other hand, because of the confinement of some of the particles to the real line and the symmetry with respect to the real axis, collisions between two real particles and between a pair of complex conjugate particles are possible. Developing the Laplacian and the gradient squared for those interactions (real-real and complex-complex conjugate) one can see that
\begin{align*}
\frac{\sigma^2}{2n^2}\Delta_j V(\lambda_j,\lambda_r)-\frac{1}{n^2}(\nabla_j V(\lambda_j,\lambda_r))^2&\leq  \frac{\sigma^2-2}{2n^2}\frac{1}{|\lambda_j-\lambda_r|^2}\qquad\mbox{ if }\lambda_j,\lambda_r\in\R\\
\frac{\sigma^2}{2n^2}\Delta_j V(\lambda_j,\lambda_j^*)-\frac{1}{n^2}(\nabla_j V(\lambda_j,\lambda_j^*))^2&\leq  \frac{\sigma^2-2}{2n^2}\frac{1}{|\lambda_j-\lambda_j^*|^2}\qquad\mbox{ if }\lambda_j\in\C
\end{align*}
and hence, using the same arguments as in \cite{Rogers93}, one can show that for $\sigma^2\leq 2$ particles do not collide before an explosion.\\
Now it remains to show that there is no explosion in finite time. Let $R_t = \frac{1}{2n}\left(\sum_j x_j^2+y_j^2\right)=\avg{\mu_t^n,f}$ where $f(\lambda)=|\lambda|^2/2$. We have to prove that $R_t<\infty$ for all $t$.  
As shown in \cite{Rogers93} $R_t\leq R'_t$ a.s. for all $t$ where $R'_t$ solves
\begin{equation*}
  dR'_t = \frac{\sigma}{n}\sqrt{2R'_t}dW_t+ \left(\frac{\sigma^2}{2n} + \frac{(n-1)}{n} - \gamma\right)dt
\end{equation*}
for $W_t$ a Brownian motion. $R'_t$ is a squared Bessel process which is known not to explode so $R_t$ does not explode either.

\end{proof}

\begin{remark}
	The assumptions on the confining potential $U$ of the above theorem are clearly satisfied for $U$ given by~\eqref{eq:U}.
\end{remark}

The result of this theorem ensures existence and uniqueness of the particle system for all times and finite $n$. If this system converges in time towards a non-singular stationary distribution, the limit is necessarily given by the joint pdf~\eqref{eq:pdf} of the eigenvalues of the Ginibre ensemble under our constraints. In this case the empirical measure of the particle system $\rho^n_k(t)$ converges in distribution as $t\to\infty$ to the empirical spectral measure $\hat\mu^n$ of the Ginibre ensemble with $k$ real eigenvalues. Therefore, simulating the long term dynamics of the particle system will provide an approximation of $\hat\mu^n$.

\subsection{Numerical simulations}

In principle, it is possible to generate numerically the spectra of Ginibre matrices conditioned on $k$ by drawing Ginibre matrices at random and classifying them according to the number of real eigenvalues. However, for the events we consider, $k$ deviates from its expected value and $p^n_k$ decrease as  $e^{-n^2}$, so that matrices with large $k$ are not readily accessible with such a strategy. As we dispose of an explicit form of the joint probability distribution of the eigenvalues under our constraint, standard sampling methods, such as the Monte Carlo algorithm described in appendix \ref{sec:MC}, allow to access directly these rare events. Unfortunately, they are not efficient for large matrices. 

The same limitations at large $n$ arise for simulations of the log-gas given that the system is stochastic and one needs to average over several runs to smooth out fluctuations. To overcome these, we took advantage of the fact that the noise term perturbing each particle vanishes. Therefore, for estimating $n \to \infty$ spectral distributions, instead of stochastic simulations, we ran deterministic simulations of the gas. Even though the latter do not correspond to the eigenvalues of actual matrices, their $n\to\infty$ limit converges to the limit distribution of the eigenvalues, which is exactly what we seek to approximate. We checked convergence by increasing progressively the number of particles of the gas until the shapes for the complex support and the distribution of the real phase were visually indistinguishable.

Here we summarize our key findings, which can be considered educated guesses based upon known facts and numerical evidence (see figure \ref{fig:contours}). The first claim is that the border of the support of each connected component of the complex part of $\mu_\alpha$ is a smooth closed curve for all $0<\alpha<1$. This closed curve varies continuously with $\alpha$. In fact, as $\alpha$ goes from zero to one, the support $v$ decreases monotonically (in the sense of strict inclusion) from the unit semi-circle in $\H$ to the point $z^*=(0,y^*)$ where $z^*$ is found by minimizing the potential that a single pair of complex particles would experience interacting with $\mu^R=\mu_{sc}$ which is given by the map $z\mapsto|z^2|-\int\log|z-x|d\mu_{sc}(x)$. The coordinate $y^*$ is hence the solution of:
 	\begin{equation}\label{eq:ystar}
  			\int \frac{1}{x^2+(y^*)^2}d\mu_{sc}(x)=2
 		\end{equation}
Coincidentally, the support of $\mu^R_\alpha$ grows monotonically from $[-1,1]$ to $[-\sqrt{2},\sqrt{2}]$ as $\mu^R$ converges to the semi-circular law in $[-\sqrt{2},\sqrt{2}]$.

\begin{figure}[H]
\centering
 \includegraphics[width=0.7\textwidth]{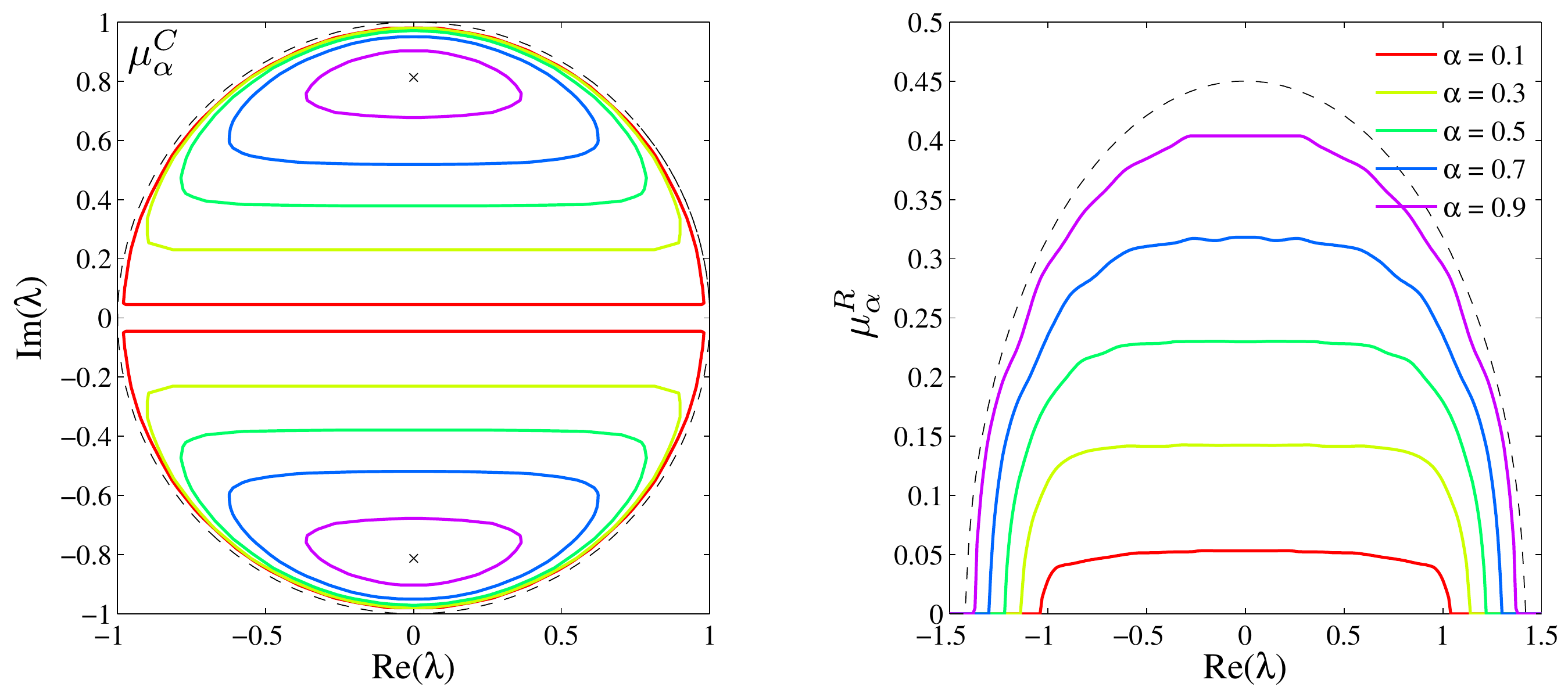}
 \caption{Left: Numerical estimation of the support of $\mu_\alpha^C$ for $\alpha\in\{0.1,0.3,0.5,0.7,0.9\}$. The crosses correspond to the points $(0,\pm y^*)$, the dashed line is the unit circle. Right: Numerical estimation of $\mu^R_\alpha$ for $\alpha\in\{0.1,0.3,0.5,0.7,0.9\}$. The dashed line is the semi-circular law. The estimations are long time simulations of log-gases with $n=1000$.}\label{fig:contours}
\end{figure}

\section{Renormalized energy and microscopic organization at zero temperature}
\label{sec:renorm}

The gas provides a precise description of the minima of the potential~\eqref{eq:potential} even at temperatures that differ from that corresponding to the eigenvalues of the Ginibre ensemble. In particular, expansion of the energy of the system supplies further information on the microscopic organization of the particles at vanishing temperature. To this end, following the work for log-gases in dimension one~\cite{sandier20131d}, two~\cite{Sandier13} or higher~\cite{rougerie2013higher}, one shall compute the next-to-leading order terms of the energy. These terms correspond essentially to the microscopic arrangements of the particles. Since we have shown that in the regime we consider with $k=\overline{\alpha n}$, there exists a  macroscopic gap between the real axis and complex eigenvalues, the energy related to the microscopic interactions of real onto the complex eigenvalues and reciprocal forces vanish in the thermodynamic limit. Once this has been noted, a direct application of the 
results in one and two dimensions~\cite{sandier20131d,Sandier13} leads to state the following:

\begin{proposition}\label{pro:renormalized}
	For large $n$, the equivalent energy of a Ginibre matrix conditioned on having $k\sim \alpha n$ real eigenvalues enjoys the following expansion around ${\mu}_{\alpha}=\alpha \mu^{R}_{\alpha} + (1-\alpha) \mu^{C}_{\alpha}$ the minimizer of the macroscopic energy:
	\begin{multline}\label{eq:ExpansionEnergy}
		\mathcal{E}_{n,\alpha}= n^2 \mathcal{I}({\mu}_{\alpha}) - \frac{(1+\alpha)}{2} n\log (n) 
		+ n\Big[ (1-\alpha)\frac{\kappa_2}{2\pi} + \alpha \frac{\kappa_1}{\pi} \Big ] \\
		- n\Big[(1-\alpha) \int_{\C} d\mu^{C}_{\alpha}\log(\mu^{C}_{\alpha})+\alpha \int_{\R} d\mu^{R}_{\alpha}\log(\mu^{R}_{\alpha})\Big] + o(n)
	\end{multline}
where the $\kappa_2$ and $\kappa_1$ are universal constants related to the dimension of the spaces where $\mu_\alpha^C$ and $\mu_\alpha^R$ are supported.
\end{proposition}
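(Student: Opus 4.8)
The plan is to identify $\mathcal{E}_{n,\alpha}$ with the free energy of the conditional log-gas, namely minus the logarithm of the partition function obtained by restricting the joint density~\eqref{eq:pdf} to configurations with $k=\overline{\alpha n}$ real eigenvalues (equivalently $-\log p^n_{\overline{\alpha n}}$ up to the $\alpha$-independent normalisation), and to expand it one order at a time following the renormalised-energy analyses of Sandier and Serfaty in dimension one~\cite{sandier20131d} and two~\cite{Sandier13}. The only genuinely new feature is the coexistence of the two phases, and the heart of the argument is that they decouple at every order down to $o(n)$. Concretely I would write the $n$-particle Hamiltonian attached to~\eqref{eq:pdf} as a sum of four blocks: (a) the real--real logarithmic repulsion plus the quadratic confinement of the $k$ particles on $\R$; (b) the complex--complex repulsion (which for conjugate pairs also contains the mirror terms $\log|z-\bar w|$) plus the confinement of the $(1-\alpha)n$ complex particles; (c) the real--complex cross interaction $\sum_{i\le k}\sum_{j>k}\log|\lambda_i-\lambda_j|$; and (d) the $\mathrm{erfc}$ contribution of~\eqref{eq:U}. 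The leading term of $\mathcal{E}_{n,\alpha}$ is then $n^2\,\I[\mu_\alpha]$ by~\eqref{eq:pnan}, and only the orders $n\log n$ and $n$ remain.

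Next I would dispose of blocks (c) and (d) using the macroscopic gap. By Theorem~\ref{thm:mualpha}(ii) the complex support $V_\alpha$ lies at a positive distance $d_\alpha>0$ from $\R$, and the exponential tightness proved after Theorem~\ref{thm:LDP} localises the relevant configurations near $\mu_\alpha$, so $|y_j|\ge d_\alpha/2$ for every complex particle. On that region $\mathrm{erfc}(|y_j|\sqrt{2n})\sim e^{-2ny_j^2}/(|y_j|\sqrt{2\pi n})$, so~\eqref{eq:U} becomes $U(\lambda_j)=\tfrac{|z_j|^2}{2}+\tfrac1{4n}\log(2\pi n)+\tfrac1{2n}\log|y_j|+o(1/n)$: the first correction sums over the $(1-\alpha)n/2$ pairs to $O(\log n)=o(n)$, the second to $O(1)$, so block (d) merely restores the standard confinement $|z|^2/2$ at the cost of an $o(n)$ error. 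For block (c) the gap makes $(x,z)\mapsto\log|x-z|$ smooth with bounded derivatives on $\R\times V_\alpha$, hence slowly varying on the microscopic scales $n^{-1}$ (real) and $n^{-1/2}$ (complex); expanding the cross energy around $\mu_\alpha$ yields the mean-field value $2\alpha(1-\alpha)n^2\iint\log|x-z|\,d\mu^R_\alpha\,d\mu^C_\alpha$, absorbed into $n^2\I[\mu_\alpha]$, plus fluctuations bounded by a negative-Sobolev norm of $\hat\mu-\mu_\alpha$ times $\|\log|x-z|\|_{C^2(\R\times V_\alpha)}$, which are $o(n)$. This is exactly the statement made before the proposition that the microscopic forces between real and complex eigenvalues disappear in the thermodynamic limit.

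Once blocks (c) and (d) are reduced, the log-partition function splits, up to $o(n)$, as $-n^2\I[\mu_\alpha]$ plus the free energy of a one-dimensional log-gas of $k=\overline{\alpha n}$ particles with equilibrium measure $\mu^R_\alpha$ --- which by Theorem~\ref{thm:mualpha}(i) has an $L^2$ density on the compact interval $[-R_\alpha,R_\alpha]$ --- plus the free energy of a genuine two-dimensional Coulomb gas of the $(1-\alpha)n$ complex particles with equilibrium measure $\mu^C_\alpha$, of constant density on the bounded connected set $V_\alpha$. The conjugate-pair constraint on the latter is harmless, since a particle and its mirror lie at distance $\ge d_\alpha$, so the mirror terms are smooth at scale $n^{-1/2}$ and the planar renormalised-energy expansion of~\cite{Sandier13} applies verbatim to the configuration in $\H$. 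Feeding~\cite{sandier20131d} into the real phase gives $-\alpha n\log n+n\alpha\,\kappa_1/\pi-n\alpha\int_\R\log(\mu^R_\alpha)\,d\mu^R_\alpha+o(n)$, the factor $\alpha$ in front of $\log n$ coming from unfolding $\overline{\alpha n}$ particles at their spacing $n^{-1}$, with $\kappa_1$ the minimum of the one-dimensional renormalised energy and the last term the entropy of the intra-cell positions weighted by the local density; feeding~\cite{Sandier13} into the complex phase gives $-\tfrac{1-\alpha}{2}n\log n+n(1-\alpha)\,\kappa_2/(2\pi)-n(1-\alpha)\int_\C\log(\mu^C_\alpha)\,d\mu^C_\alpha+o(n)$, the factor $\tfrac12$ reflecting the spacing $n^{-1/2}$ of the $(1-\alpha)n$ planar particles. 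Summing the three pieces produces~\eqref{eq:ExpansionEnergy}, with $n\log n$-coefficient $\alpha+\tfrac{1-\alpha}{2}=\tfrac{1+\alpha}{2}$.

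The main obstacle is not the bookkeeping of the $O(n)$ constants (routine once the $\mathrm{erfc}$ asymptotics are tracked and the per-phase particle numbers $\alpha n,(1-\alpha)n$ are converted to the total $n$), but the rigorous control of the two places where the gap is invoked: proving that block (c) and the mirror terms contribute only $o(n)$, rather than merely at the mean-field order $n^2$, and that the Sandier--Serfaty machinery --- stated for a fixed analytic confinement --- still applies to each phase when the other acts on it as an $n$-dependent but, by Theorem~\ref{thm:mualpha}, uniformly smooth external field with the prescribed equilibrium measure. This is precisely where the two-phase problem departs from~\cite{sandier20131d,Sandier13}, and where the gap $\bar V_\alpha\cap\R=\emptyset$ is indispensable.
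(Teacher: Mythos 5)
Your proposal follows essentially the same route as the paper, which justifies the proposition only by the short argument preceding it: the macroscopic gap between $\R$ and $V_\alpha$ (Theorem~\ref{thm:mualpha}) kills the microscopic real--complex interactions, after which the one- and two-dimensional renormalized-energy expansions of~\cite{sandier20131d,Sandier13} are applied separately to each phase and summed, giving the $n\log n$ coefficient $\alpha+\tfrac{1-\alpha}{2}=\tfrac{1+\alpha}{2}$. Your version is in fact considerably more detailed than the paper's sketch (the block decomposition, the $\mathrm{erfc}$ asymptotics, and the honest identification of where the decoupling still needs rigorous control), and the caveats you flag at the end are precisely the points the paper leaves implicit.
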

From this expansion, the minimization at zero temperature leads to state that the real eigenvalues crystallize~\cite{sandier20131d}, and to the conjecture that the complex eigenvalues organize in a regular triangular lattice (called Abrikosov lattice in the superconductivity domain~\cite{abrikosov:57}), which is proved under the assumption that the organization is a regular lattice~\cite{Sandier13}.

\section{Conclusion}
\label{sec:conc}

The main conclusions of our work can be outlined as:
\begin{enumerate}
\item Despite the fact that the joint probability distribution of real Ginibre matrices is always given the same compact formula above (Eq.\eqref{eq:pdf}), the limit distributions of the empirical measure $\hat \mu^n$ strongly depends on the number $k$ of real eigenvalues. When $k/n\rightarrow\alpha$ and $n\to \infty$, we prove that the empirical measure has a unique limit $\mu_\alpha$ that significantly departs from the circular law. 

\item The key method in establishing the above result is an LDP theorem devised to take into account both real and complex eigenvalues. Previous LDPs discarded real eigenvalues, as in the unconstrained matrices, the fraction of real eigenvalues tends to zero. While we have provided the proof in the specific case of the real Ginibre ensemble, it can readily be adapted to other situations, such as gases in higher dimensions with more general confining and repulsive potentials such as \cite{chafai13} or heterogeneous gases \cite{garcia15} with constraints (fraction of particles restrained to a subspace). It also extends to Gaussian $\beta$ and Wishart ensembles conditioned on rare events (such as anomalous proportions of eigenvalues in a given interval). In this sense, our approach  provides a theoretical basis for the log-gas method used in~\cite{Majumdar07,majumdar2009index,majumdar2012number,allez2014index}. These generalizations allow, for instance, to analyze the impact of anomalously large numbers of real 
eigenvalues on 
the spectra of random asymmetric matrices with some level correlations in the entries such as those in \cite{Sommers88,Lehmann91}.


\item One of the consequences of the LDP was to provide an asymptotic for $\log p^n_{\overline{\alpha n}}$ which scales as $-n^2\I[\mu_\alpha]$. To our knowledge, this is the first derivation of large $n$ estimate for $p^n_k$. 

\item The theoretical and numerical analysis of $\mu_\alpha$ established that, unlike the circular law which is supported by a disk, the measure $\mu_\alpha$ has two distinct components. The first is supported by a compact set of the complex plane that is well separated from the real line, and upon which $\mu_\alpha$ has uniform density of mass $1-\alpha$. The second is supported by a segment in the real plane and has a density w.r.t. Lebesgue's measure with mass $\alpha$. As $\alpha$ increases to one, the support of the former shrinks to a single point and its complex conjugate whereas the latter tends to the semi-circular law on $[-\sqrt{2},\sqrt{2}]$. 

\item The microscopic characterization of the particle distributions at zero temperature through the renormalized energy reveals that, (i) in the complex plane, particles organize in an Abrikosov lattice, similar to the unmixed 2d gas, yet (ii) on the real line, they are crystallized similarly to the Gaussian Orthogonal Ensemble in the zero temperature limit, but unlike real eigenvalues of unconstrained real Ginibre matrices.
\end{enumerate}

The last two points above establish that the real Ginibre ensemble constrained by $k/n\rightarrow \alpha$ interpolates between the circular and semi-circular law as $\alpha$ shifts from zero to one at the macroscopic level. At the microscopic level, it is a mixture of the two extreme cases. This interpolation is distinct from the ones in which the circular law is progressively flattened on the real line with intermediate elliptic like support for the spectra \cite{Sommers88,Lehmann91}. It reveals some of the rich characteristics of real random matrices due to their spectrum containing both real and complex eigenvalues.

\appendix
\section{Upperbound for the Large Deviations Principle}\label{sec:upperbound}
We now sketch the proof of the upper bound of the Large Deviations estimates:
\begin{proof}
	We define $l=(n-k)/2$ the number of complex eigenvalues in $\mathbb{H}$. Similarly to~\cite{Benarous98}, by a direct use of the joint pdf~\eqref{eq:pdf}, we find that the mass of a ball $B(\nu,\delta)\cap \M_{\alpha}^n$ is:
\comment{	
	\begin{eqnarray}
		&Q^n(B(\nu,\delta)\cap \M_{\alpha}^n)=\int_{\R^k \times \H^l}\prod_{i=1}^k d\lambda_i\prod_{j=1}^l dX_j\prod_{k=1}^l dY_k\mathbbm{1}_{B(\nu,\delta)\cap \M_{\alpha}^n} \mathbb{P}[\lambda_1,\dots,\lambda_n] \label{eq:Qn}\\		
		  \nonumber & \leq D_n \int_{\R^k \times \H^l} \prod_{i=1}^k d\lambda_i\prod_{j=1}^l dX_j\prod_{k=1}^l dY_k \mathbbm{1}_{B(\nu,\delta)\cap \M_{\alpha}^n}\exp\Big( -\frac {n^2} 2 \int_{\C^2} f_\omega(z,z') d\hat{\mu}^n(z)d\hat{\mu}^n(z') \Big)e^{Kn/2}
	\end{eqnarray}}
	where 
	\[f_\omega(x,y) = \min\left(\omega, \frac{\vert x\vert^2+\vert y\vert^2}{2} -\log\vert x-y\vert\right),\]
	the measure $\hat{\mu}^n$ defined as: 
	\[\hat{\mu}^n = \frac 1 n \left(\sum_{i=1}^k \delta_{\lambda_i} + \sum_{j=1}^{l} \delta_{X^j+\mathbf{i} Y^j}+\delta_{X^j-\mathbf{i} Y^j}\right),\]
	and the coefficient $D_n$ is such that $1/n^2 \log(D_n)\to K$. We hence have:
	\begin{multline*}
		Q^n(B(\nu,\delta)\cap \M_{\alpha}^n) \leq D_n \int_{\R^k \times \H^l} \prod_{i=1}^k d\lambda_i\prod_{j=1}^l dX_j\prod_{k=1}^l dY_k \\
		\mathbbm{1}_{B(\nu,\delta)\cap \M_{\alpha}^n}\exp\Big( -\frac {n^2} 2 \inf_{\mu\in B(\nu,\delta)\cap \M^n_{\alpha}}\int_{\C^2} f_\omega(z,z') d{\mu}(z)d{\mu}(z') \Big)e^{Kn/2}
	\end{multline*}
	allowing to conclude that:
	\[\frac 1 {n^2} \log(Q^n(B(\nu,\delta)\cap \M_{\alpha}^n)) \leq \frac{\log(D_n)}{n^2}-\frac 1 2 \inf_{\mu\in B(\nu,\delta)\cap \M^n_{\alpha}}\int_{\C^2} f_\omega(z,z') d{\mu}(z)d{\mu}(z') + o(1).\]
	The first term in the right hand side converges towards a constant. The main difference with the proof of~\cite{Benarous98} is the fact that the right hand side of the inequality depends on $n$. Here, the sequence of real values $(\inf_{\mu\in B(\nu,\delta)\cap \M^n_{\alpha}}\int_{\C^2} f_\omega(z,z') d{\mu}(z)d{\mu}(z'))_{n\geq 0}$ is non-decreasing because of the inclusion property of the spaces $\M_{\alpha}^n$. It is upper bounded by $\inf_{\mu\in B(\nu,\delta)\cap \M_{\alpha}}\int_{\C^2} f_\omega(z,z') d{\mu}(z)d{\mu}(z')$, and therefore converges towards this maximal value. This allows to conclude that:
	\[\limsup_{n\to \infty}\frac 1 {n^2} \log(Q^n(B(\nu,\delta)\cap \M_{\alpha}^n)) \leq K + \inf_{\mu\in B(\nu,\delta)\cap \M_{\alpha}}\int_{\C^2} f_\omega(z,z') d{\mu}(z)d{\mu}(z')\]
	and by continuity of the map $\mu\mapsto \int_{\C^2} f_\omega(z,z') d{\mu}(z)d{\mu}(z')$ in the weak sense, we eventually have:
	\[\lim_{\delta \searrow 0} \; \limsup_{n\to\infty}\;\frac 1 {n^2}\log\P[\hat{\mu}^n \in \M_{\alpha}^{n}\cap B(\nu,\delta)] \leq  K + \int_{\C^2} f_\omega(z,z') d{\nu}(z)d{\nu}(z').\]
	Our result follows by letting $\omega$ to infinity and using the monotone convergence theorem. 
\end{proof}

\section{Monte Carlo algorithm for the eigenvalues}\label{sec:MC}
An efficient method to approximate numerically the minimizer $\mu_{\alpha}$ and the probability distribution of the proportion of real eigenvalues is to use the Metropolis-Hastings Monte Carlo algorithm. This method consists in constructing an ergodic Markov chain whose stationary distribution is given by~\eqref{eq:pdf}. Here, we evolve a $n$-particles system $z_t$, but in contrast to the log-gas, the dynamics is now discrete, and the transition probability is based on the pdf~\eqref{eq:pdf}: a new configuration $z^*$ is drawn by modifying one of the eigenvalues at random and the Markov chain has a transition towards $z^*$ if $Q^n(z^*)>Q^n(z_t)$, and otherwise according to a Bernoulli variable of parameter $\frac{Q^n(z^*)}{Q^n(z_t)}$.

When conditioning on very rare events, (here for instance, a fixed number of real eigenvalues), cases satisfying the constraints have an extremely low probability of being explored, and more refined methods need to be developed in order to access these probabilities. In the present case, the problem is considerably simplified since we dispose of an explicit form of the distribution of the eigenvalues under our constraint. Indeed, the joint probability distribution of Ginibre matrices of size $n$ constrained on having $k$ real eigenvalues $(\lambda_i\;;\; i=1\cdots k)$ (and therefore $l=(n-k)/2$ pairs of complex eigenvalues $(z_i, i=1\cdots n-k)$) is given by:
\[\P[\lambda_1\cdots \lambda_k, z_1,\cdots,z_{n-k}]=\tilde{C}_n \prod_{i>j}\vert\lambda_i-\lambda_j\vert \prod_{i>j} \vert z_i-z_j\vert \left(\prod_{i=1}^k\exp(-\lambda_i^2)\prod_{i=1}^{n-k} \exp(-z_i^2)\textrm{erfc}(\vert z_i-z_i^*\vert/\sqrt{2})\right)^{1/2}.\] 
Classical Metropolis-Hastings algorithm with Gaussian transitions preserving the nature of the system therefore allow to access directly the distribution of eigenvalues and the probability $p(n,k)$ of the event considered.

\end{document}